\newtheorem{thm}{Theorem}
\newtheorem{lem}{Lemma}
\newtheorem{prop}{Proposition}
\theoremstyle{definition}
\newtheorem{defi}{Definition}
\newtheorem{ex}{Example}
\newtheorem{ass}{Assumption}
\newcommand{\abf}{\mathbf{a}}
\newcommand{\bbf}{\mathbf{b}}
\newcommand{\Fbb}{\mathbb{F}}
\newcommand{\Nbb}{\mathbb{N}}
\newcommand{\Rbb}{\mathbb{R}}
\newcommand{\Ubb}{\mathbb{U}}
\newcommand{\Zbb}{\mathbb{Z}}
\newcommand{\Ccal}{\mathcal{C}}
\newcommand{\Fcal}{\mathcal{F}}
\newcommand{\Pcal}{\mathcal{P}}
\newcommand{\Zcal}{\mathcal{Z}}
\newcommand{\ind}{\mathbf{1}}
\newcommand{\obs}{\mathrm{obs}}
\newcommand{\pval}{\mathrm{pval}}
\newcommand{\Fbin}{F_\mathrm{bin}}
\newcommand{\T}{\mathsf{T}}
\newcommand{\wtld}[1]{\widetilde{#1}}
\newcommand{\Exp}{\mathrm{E}}
\newcommand{\Var}{\mathrm{V}}
\newcommand{\Cov}{\mathrm{Cov}}
\newcommand{\ave}{\mathrm{ave}}
\newcommand{\tr}{\mathrm{tr}}
\title{Improving Randomization Tests under Interference Based on Power Analysis}
\author{Mizuho Yanagi\thanks{Graduate School of Information Science and Technology,
The University of Tokyo, 7-3-1 Hongo, Bunkyo-ku, Tokyo, 113-8656, Japan.} \and Tomonari Sei\footnotemark[1] \thanks{sei@mist.i.u-tokyo.ac.jp}}
\begin{document}

\maketitle

\begin{abstract}
%\begin{comment}
%In causal inference, we can consider a situation in which treatment on one unit affects others, i.e., interference exists.
%In the presence of interference, we cannot perform a classical randomization test directly because a null hypothesis is not sharp.
%Instead, we need to perform the randomization test restricted to a subset of units and assignments that makes the null hypothesis sharp.
%\cite{Puelz} used a graphical approach to construct a generic test method, a biclique test, by reducing the selection of the appropriate subsets to searching for bicliques in a bipartite graph.
%However, since the power depends on the features of selected subsets, there is still room to improve the power by refining the selection procedure.
%In this paper, we propose a method to improve the power of the biclique test based on a power evaluation of the randomization test.
%We explicitly derived a power expression for the randomization test under several assumptions and found that a certain quantity calculated from a given assignment set characterizes the power.
%Based on this fact, we proposed a method to improve the power of the biclique test by modifying the selection rule for subsets of units and assignments.
%Through a simulation with a spatial interference setting, we confirmed that the proposed method has higher power than the existing method.
%\end{comment}
In causal inference, we can consider a situation in which treatment on one unit affects others, i.e., interference exists.
In the presence of interference, we cannot perform a classical randomization test directly because a null hypothesis is not sharp.
Instead, we need to perform the randomization test restricted to a subset of units and assignments that makes the null hypothesis sharp.
A previous study constructed a useful testing method, a biclique test, by reducing the selection of the appropriate subsets to searching for bicliques in a bipartite graph.
However, since the power depends on the features of selected subsets, there is still room to improve the power by refining the selection procedure.
In this paper, we propose a method to improve the biclique test based on a power evaluation of the randomization test.
We explicitly derived an expression for the power of the randomization test under several assumptions and found that a certain quantity calculated from a given assignment set characterizes the power.
Based on this fact, we propose a method to improve the power of the biclique test by modifying the selection rule for subsets of units and assignments.
Through a simulation with a spatial interference setting, we confirm that the proposed method has higher power than the existing method.
\\
{Keywords: biclique test, causal inference, interference, power analysis, randomization test}
\end{abstract}

\section{Introduction}
In causal inference, we usually assume that the outcome of each unit depends only on treatment for itself.
This assumption is called {\it no interference} \citep{Cox}, which is one of the components of the stable unit treatment value assumption (SUTVA) \citep{Rubin}.
However, in practice, there are many situations where interference exists.
For example, we expect that vaccination to a part of a population reduces the number of infected units and consequently lowers the infection risk for untreated units, which is known as herd immunity.
It is one of the examples where interference does exist.
In recent years, causal inference under interference has received much attention and has been studied from various approaches (see \cite{Halloran_and_Hudgens} for a review).

In the presence of interference, standard approaches often fail.
One example is the {\it randomization test}.
The randomization test is a classical method proposed by \cite{Fisher} to test a sharp null hypothesis of no treatment effect.
However, when interference exists, we cannot perform it because a null hypothesis is no longer sharp.
Hence, alternative methods were proposed to test a non-sharp null hypothesis \citep{Aronow, Athey, Basse, Puelz}.
The main idea of these methods is to focus on a subset of units and assignments on which the null hypothesis is sharp and perform the randomization test restricted on it.
These methods are called {\it conditional randomization tests} because we perform them conditionally on a selected subset.

What characterizes the conditional randomization test is how to select the appropriate subset, which determines the power.
\cite{Athey} took the approach of selecting focal units regardless of a realized assignment.
Although it is generally applicable, the power may be low because it does not use any information about the realized assignment.
Later, \cite{Basse} showed a general procedure of the conditional randomization test, which incorporates the information of the realized assignment.
However, it did not clarify the concrete procedure for selecting subsets and its applications were limited.

On the other hand, \cite{Puelz} took a graphical approach to construct a concrete procedure for selecting a subset.
They showed that appropriate subsets of units and assignments correspond to bicliques in a bipartite graph constructed for a null hypothesis and reduced the problem to searching for bicliques in the graph.
Based on this idea, they proposed {\it biclique tests}, which can be applied to any type of interference.

In this paper, we propose a method to improve the biclique test.
Since the power of the biclique test depends on the features of selected bicliques, we can improve it by modifying the procedure to select more desirable bicliques.
Thus, we analyzed the randomization test to clarify what features characterize its power and proposed a method to improve the biclique test by modifying the selection rule to get bicliques.
Also, we compared the proposed method with the existing method through simulations in scenarios with spatial interference.

This paper is organized as follows.
In Section \ref{Puelz's method}, we describe the formulation of null hypotheses under interference and review the previous studies.
In Section \ref{proposed method}, we propose our method to improve the biclique test.
In Section \ref{numerical experiment}, we show the numerical experiments.
In Section \ref{conclusion}, we conclude with a summary and future perspectives.
%Supplementary material includes the proofs of theorems and propositions.
Appendix includes the proofs of theorems and propositions.

%%%%%%%%%%%%%%%%%%%
\section{Conditional randomization tests by graph-theoretic approach: a review}
\label{Puelz's method}

\subsection{Notation and formulation of null hypotheses}
Let $\Ubb = \{1, \dots, N\}$ denote the set of all units and let $z = (z_1, \dots, z_N)^{\T} \in \{0, 1\}^{N}$ denote the vector of treatment assignments for the $N$ units ($1$ : treatment, $0$ : control).
The treatment assignment is selected according to a known probability distribution $P(z)$, which is determined by experimenters.
The support of $P(z)$, i.e., the set of possible assignments, is denoted by $\Zbb = \{z \in \{0,1\}^N : P(z) > 0\}$ and the realized assignments are denoted by $Z^{\obs}$.
Let $Y(z) = (Y_1(z), \dots, Y_N(z))^{\T} \in \Rbb^N$ be the vector of potential outcomes for the units and denote the outcomes realized under the actual assignment by $Y^{\obs} = Y(Z^{\obs})$.
Note that $Y(z)$ is a fixed vector and the stochastic behavior in this setting is caused by the randomization of assignments $Z^\obs\sim P(z)$.

%(spilloverの説明)
Following the framework of \cite{Manski} and \cite{Aronow_and_Samii}, we formulate interference.
Now, we define a map $f_i : \Zbb \rightarrow \Fbb$, where $\Fbb$ is a suitable (finite) set.
We assume that each unit $i$ is exposed to $f_i(z) = \abf \in \Fbb$ under the assignment $z$.
This $\abf$ is a low-dimensional summary of $z$ that represents an intrinsic exposure to the unit $i$.
We call it a {\it treatment exposure}.
The map $f_i$ characterizes the type of interference between units and we call it an {\it exposure mapping function}.
For example, under the assumption of no interference, we can write $f_i(z) = z_i$ simply.
In this case, we denote $Y_i(z)$ as $Y_i(z_i)$.

\begin{ex}[Spatial interference]
\label{spatial}
Suppose that each unit $i$ is located at $x_i \in \Rbb^2$.
Considering a situation where spatial interference exists, the set of treatment exposures and the exposure mapping function are
\begin{align}
	\Fbb = \{0, 1, 2\}, \quad f_i(z) = 
	\begin{cases}
		2 & (z_i = 1) \\
		1 & (z_i = 0 \text{ and } z_j = 1, ||x_i - x_j|| \leq r \text{ for some $j \in \Ubb$}) \\
		0 & (\text{otherwise})
	\end{cases}
\end{align}
respectively.
In this setting, we assume that each unit is affected by other units being treated within a certain distance $r$ from it, in addition to whether or not being treated itself.
\end{ex}

In the presence of interference, a null hypothesis of interest is whether the potential outcomes vary under different treatment exposures.
We can express this hypothesis in the following general form:
\[
	H^{\Fcal}_0 : Y_i(z) = Y_i(z') \ \text{for any} \ (i, z, z') \in \Ubb \times \Zbb^2 \ \text{s.t.} \ f_i(z), f_i(z') \in \Fcal,
\]
where $\Fcal \subset \Fbb$ characterizes the null hypothesis.
\begin{comment}
As a special case, when $\Fcal = \Fbb$,
\[
	H^{\Fbb}_0 : Y_i(z) = Y_i(z') \ \text{for any} \ (i, z, z') \in \Ubb \times \Zbb^2.
\]
This represents the hypothesis that outcomes do not change under any assignment, i.e., there is no effect of the treatment.
Also, if $\Fcal = \{\abf\}$, 
\[
	H^{\{\abf\}}_0 : Y_i(z) = Y_i(z') \ \text{for any} \ (i, z, z') \in \Ubb \times \Zbb^2 \ \text{s.t.} \ f_i(z) = f_i(z') = \abf,
\]
which expresses the hypothesis that the outcome is always the same under the assignment when the exposure is $\abf$.
\end{comment}
For example, when $\Fcal = \{{\bf a, b}\}$, we can write
\[
	H^{\{{\bf a, b}\}}_0 : Y_i(z) = Y_i(z') \ \text{for any} \ (i, z, z') \in \Ubb \times \Zbb^2 \ \text{s.t.} \ f_i(z), f_i(z') \in \{{\abf, \bbf}\},
\]
which indicates that there is no difference in the effect between two treatment exposures $\abf$ and $\bbf$.
It is called a {\it contrast hypothesis} and we will mainly focus on this type of hypothesis in this paper.

\begin{comment}
For each of the above examples of interference between units, examples of the null hypothesis of interest are given below.

\begin{ex}[Clustered interference (cont.)]
The hypothesis of interest is whether the outcome for each unit is affected by the treatment of the other units.
That is, whether the value of the outcome is different for treatment exposures 0 and 1, which corresponds to the null hypothesis $H^{\{0,1\}}_0$.
\end{ex}
\end{comment}

\setcounter{ex}{1}
\begin{ex}[Spatial interference (cont.)]
One hypothesis of interest is $H^{\{0, 1\}}_0$.
This hypothesis corresponds to the question of whether untreated units are affected by the presence of treated units in their neighborhood.
\end{ex}

\subsection{Conditional randomization tests}
Before discussing tests of the null hypothesis under interference $H^{\Fcal}_0$, we review the classical randomization test.
Under no interference, we test a hypothesis of no treatment effect, 
\begin{align}
	H_0 : Y_i(0) = Y_i(1) \ (i \in \Ubb).
	\label{eq: H_0}
\end{align}
The randomization test \citep{Fisher} is a method to test this null hypothesis.

Let $T(z, Y)$ be a test statistic for $z \in \Zbb$ and $Y \in \Rbb^{N}$.
For example, we can use the difference in means
\[
	T(z, Y) = \ave\{Y_i \ | \ i \in \Ubb, \ z_i = 1\}
		- \ave\{Y_i \ | \ i \in \Ubb, \ z_i = 0\},
\]
where $\ave\{\bullet\}$ denotes the average of the elements of the set.
In the randomization test, we compute the exact p-value as the probability that the test statistic $T(Z, Y(Z))$ is greater than or equal to the observed value $T^{\obs} = T(Z^{\obs}, Y(Z^{\obs}))$ under the null hypothesis $H_0$:
\[
	\pval(Z^{\obs})
	= \Exp_Z\left[\ind\{T(Z, Y(Z)) \geq T^{\obs})\} | H_0\right].
\]
To compute it, we need to obtain the distribution of the test statistic under the null hypothesis.
Now, we have $Y(z) = Y^{\obs}$ for any assignment $z$ under $H_0$.
Therefore,  
\[
	T(Z, Y(Z)) = T(Z, Y^{\obs})
\]
holds and we obtain the distribution of the test statistic $T$ induced by $P(z)$.
The procedure of the randomization test is summarized as follows.

\begin{thm}[Randomization test]
\label{RT}
Let $H_0$ be a null hypothesis.
The p-value obtained from the following procedure
\begin{enumerate}
	\item Draw $Z^{\obs} \sim P(z)$, and obsereve $Y^{\obs} = Y(Z^{\obs})$.
	\item Compute $T^{\obs} = T(Z^{\obs}, Y^{\obs})$.
	\item Compute $\pval(Z^{\obs}) 
		= \Exp_Z\left[\ind\{T(Z, Y^{\obs}) \geq T^{\obs}\} \right]$.
\end{enumerate}
is valid.
That is, under $H_0$ for any $\alpha \in (0, 1)$,
\[
	P(\pval(Z^\obs) \leq \alpha) \leq \alpha
\]
holds.
\end{thm}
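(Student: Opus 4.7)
The plan is to exploit the observation that, under $H_0$, the potential-outcome vector $Y(z)$ is the same for every $z\in\Zbb$, so $Y^{\obs}$ reduces to a fixed (non-random) vector and the computed p-value becomes a measurable function of $Z^{\obs}$ alone. Writing $y$ for this common value of $Y(z)$ and $S(z):=T(z,y)$, step 3 of the procedure rewrites as
\[
	\pval(Z^{\obs})=\Exp_Z\!\left[\ind\{S(Z)\geq S(Z^{\obs})\}\right]=G(S(Z^{\obs})),
\]
where $G(s):=\Exp_Z[\ind\{S(Z)\geq s\}]$ is the survival function of $S(Z)$ with $Z\sim P$ drawn independently of $Z^{\obs}$. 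The key structural fact is that $G$ is non-increasing in $s$.

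Next I would reduce the assertion to the following general lemma: for any real-valued random variable $V$ with survival function $G(v):=P(V\geq v)$, one has $P(G(V)\leq\alpha)\leq\alpha$ for every $\alpha\in(0,1)$. Applying this lemma to $V:=S(Z^{\obs})$, whose law is the pushforward of $P(z)$ by $S$, immediately yields $P(\pval(Z^{\obs})\leq\alpha)\leq\alpha$, which is exactly the claim.

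To prove the lemma I would use the standard generalized-quantile argument. Since $G$ is non-increasing, the set $\{v:G(v)\leq\alpha\}$ has the form $[v_\alpha,\infty)$ with $v_\alpha:=\inf\{v:G(v)\leq\alpha\}$. Therefore $\{G(V)\leq\alpha\}\subseteq\{V\geq v_\alpha\}$, and by the right-continuity of $G$ from above (which follows from countable additivity of the law of $V$) we obtain $P(V\geq v_\alpha)=G(v_\alpha)\leq\alpha$, closing the bound.

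The only thing to watch out for is the handling of ties in the distribution of $S(Z)$, which are unavoidable when $\Zbb$ is finite and may well produce atoms in the p-value. Here the weak inequalities used both in the definition of $\pval$ and in the statement of the lemma conspire so that ties only inflate $G$ at its jumps, which preserves rather than breaks the inequality $G(v_\alpha)\leq\alpha$; consequently no additional subtlety beyond this discrete-quantile bookkeeping is needed, and no regularity on $T$ or $P(z)$ is required.
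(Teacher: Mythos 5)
Your reduction is the standard and correct one: under $H_0$ the potential outcome vector is the same for every assignment, so $Y^{\obs}$ is a fixed vector, $\pval(Z^{\obs})=G(S(Z^{\obs}))$ with $S(z)=T(z,Y^{\obs})$ and $G$ the survival function of $S(Z)$, $Z\sim P$, and validity reduces to the superuniformity statement $P(G(V)\leq\alpha)\leq\alpha$ for $V$ distributed as $S(Z)$. (The paper itself does not prove Theorem~\ref{RT}; it defers to Basse et al.\ (2019), and your outline is the classical route.) However, your proof of that key lemma has a genuine gap at the quantile step. The survival function $G(v)=P(V\geq v)$ is left-continuous, not right-continuous, so the set $A=\{v:G(v)\leq\alpha\}$ need not contain its infimum $v_\alpha$: it can be of the form $(v_\alpha,\infty)$. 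Concretely, let $S(Z)$ take the values $1$ and $2$ with probability $1/2$ each (two equally likely assignments) and $\alpha=0.3$; then $G(v)=1$ for $v\leq1$, $G(v)=1/2$ for $1<v\leq2$, $G(v)=0$ for $v>2$, so $v_\alpha=2$ and $G(v_\alpha)=1/2>\alpha$. Thus your claims that $A=[v_\alpha,\infty)$ and that $G(v_\alpha)\leq\alpha$ both fail, and they fail precisely in the atomic situation that is generic here, since $\Zbb$ is finite; your closing paragraph asserts that ties ``preserve'' the inequality $G(v_\alpha)\leq\alpha$, but that inequality is exactly what breaks, and the bound $P(G(V)\leq\alpha)\leq P(V\geq v_\alpha)=G(v_\alpha)$ is then too weak (in the example it gives $1/2$, not $0.3$).

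The theorem is nevertheless true and the repair is short: argue by cases on whether $v_\alpha\in A$. If $G(v_\alpha)\leq\alpha$, then $\{G(V)\leq\alpha\}\subseteq\{V\geq v_\alpha\}$ gives $P(G(V)\leq\alpha)\leq G(v_\alpha)\leq\alpha$, as you wrote. If instead $G(v_\alpha)>\alpha$, then $\{G(V)\leq\alpha\}=\{V>v_\alpha\}$ (strictly smaller than $\{V\geq v_\alpha\}$), and since $G(v)\leq\alpha$ for every $v>v_\alpha$, continuity of the probability measure along $v\downarrow v_\alpha$ yields $P(V>v_\alpha)=\lim_{v\downarrow v_\alpha}G(v)\leq\alpha$; in the finite setting one can equivalently just note that $\pval(Z^{\obs})\leq\alpha$ forces $S(Z^{\obs})$ to lie strictly above $v_\alpha$ and sum the atoms there. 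With this two-case argument (or the equivalent direct counting argument over the finitely many assignments), the lemma and hence the theorem follow; the rest of your argument, including the reduction under the sharp null, needs no change.
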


Under the null hypothesis \eqref{eq: H_0}, we can infer the potential outcomes for unobserved assignments.
This property enables us to compute the exact p-value.
The null hypothesis that satisfies the property is said to be {\it sharp}.

\begin{defi}[Sharp null hypothesis]
\label{sharp null hypothesis}
A null hypothesis $H_0$ is sharp if we can infer $Y(z)$ for any $z \in \Zbb$ from $(Z^{\obs}, Y^{\obs})$ under $H_0$.
\end{defi}

Back to the test of the null hypothesis under interference $H^\Fcal_0$.
Since the null hypothesis $H^{\Fcal}_0$ is usually not sharp, we cannot perform the classical randomization test.
A possible alternative is to focus on a subset of units and assignments, on which $H^{\Fcal}_0$ is sharp, and perform the randomization test restricted on it.
We define sharpness by the restriction as follows.

\begin{comment}
\setcounter{thm}{1}
\begin{ex}[clustered interference (cont.)]
The null hypothesis $H^{\{0, 1\}}_0$ is not sharp.
Actually, when $Z^{\obs}_i = 0$, we cannot infer $Y_i(z)$ for assignment $z$ such that $z_i = 1$ under $H^{\{0, 1\}}_0$.
\end{ex}
\setcounter{thm}{5}
\end{comment}

\begin{defi}[Sharp null hypothesis on $C$]
Let $C =(U, \Zcal)$ be a pair of a subset of units and assignments, $U \subseteq \Ubb, \Zcal \subseteq \Zbb$.
A null hypothesis $H_0$ is sharp on $C$ if we can infer $Y_U(z)$ for any $z \in \Zcal$ from $(Z^{\obs}, Y^{\obs})$ under $H_0$.
Here, $Y_U(z)$ denotes the sub-vector of $Y(z)$ corresponding to a subset $U$.
\end{defi}

The standard definition of sharpness (Definition \ref{sharp null hypothesis}) corresponds to the case where $C = (\Ubb, \Zbb)$.
Here, the subset $C = (U, \Zcal)$ is called a {\it conditioning event} and the procedure for selecting a conditioning event $C$ from $Z^{\obs}$ is called a {\it conditioning mechanism}, which we denote by $P(C|Z^{\obs})$ since it is a probablistic procedure in general.
The conditioning mechanism is assumed to satisfy $z \in \Zcal$ if $P(C|z) > 0$, so that we select a subset $C$ containing $Z^{\obs}$.

Since we perform the randomization test restricted on $C$, the test statistic $T$ must depend only on outcomes of units in $U$.
For example, when testing the contrast hypothesis $H^{\{\abf, \bbf\}}_0$, we can use the difference in means type test statistic:
\begin{align}
	T(z, Y; C) = \ave\{Y_i \ | \ i \in U, \ f_i(z) = \abf\}
		- \ave\{Y_i \ | \ i \in U, \ f_i(z) = \bbf\}.
\end{align}
We call such $T$ a {\it restricted test statistic}.

\begin{defi}[Restricted test statistic]
Let $C = (U, \Zcal)$ be some conditioning event.
A test statistic $T$ is said to be restricted on $C$ if $T$ satisfies
\[
	T(z, Y; C) = T(z, Y'; C) \ \text{for any} \ z \in \Zbb \ \text{and} \ \text{any} \ Y, Y' \in \Rbb^N \ \text{such that} \ Y_U = Y'_U.
\]
\end{defi}

Based on the above, we show the procedure of the conditional randomization test.

\begin{thm}[Conditional randomization test \citep{Basse}]
\label{conditional RT}
Let $H_0$ be a null hypothesis.
For each conditioning event $C = (U, \Zcal)$, let $T(z, Y; C)$ be a test statistic restricted on $C$.
Suppose that a conditioning mechanism $P(C|z)$ satisfies
\begin{align}
	Y_U(z) = Y_U(z') \ \text{for any} \ z, z' \in \Zbb \ \text{s.t.} \ P(C|z) > 0, P(C|z') > 0
	\label{eq: sharp}
\end{align}
for any $C$ under $H_0$.
Then, the p-value obtained from the following procedure
\begin{enumerate}
	\item Draw $Z^\obs \sim P(z)$, and observe $Y^\obs = Y(Z^\obs)$.
	\item Draw $C \sim P(C|Z^\obs)$.
	\item Compute $T^\obs = T(Z^\obs, Y^\obs; C)$.
	\item Compute $\pval(Z^\obs; C)
		= \Exp_{Z}\left[\ind\{T(Z, Y^\obs; C) \geq T^\obs\} | C\right]$.
\end{enumerate}
is conditionally valid.
That is, under $H_0$ for any $\alpha \in (0, 1)$,
\[
	P(\pval(Z^\obs; C) \leq \alpha | C) \leq \alpha
\]
holds.
Here, the expectation in step (d) is taken for $P(z|C) \propto P(z) P(C|z)$.
\end{thm}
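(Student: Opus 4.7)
The plan is to work conditionally on the drawn event $C$ and reduce the statement to the classical fact that the rank of an observation among an i.i.d.\ sample is super-uniform. The two hypotheses on $T$ (restriction to $U$) and on $P(C\mid z)$ (sharpness on $C$) are exactly what we need to replace the random vector $Y^\obs$ by a deterministic vector inside the test statistic, after which the argument becomes the same as in Theorem \ref{RT}.

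First, I would fix a conditioning event $C=(U,\Zcal)$ with $P(C)>0$ and condition throughout on $\{C\}$. Under the conditional law, $Z^\obs$ is supported on $\{z\in\Zbb:P(C\mid z)>0\}$ and is distributed as $P(z\mid C)\propto P(z)P(C\mid z)$. By the sharpness assumption \eqref{eq: sharp}, the vector $Y_U(z)$ takes one common value $y^*\in\Rbb^{|U|}$ for every $z$ in this support; hence, under $H_0$ and on the event $C$,
\[
    Y_U^\obs \;=\; Y_U(Z^\obs) \;=\; y^*\quad\text{almost surely.}
\]
Because $T(\,\cdot\,,\,\cdot\,;C)$ is restricted on $C$, its value depends on the outcome argument only through its $U$-coordinates, so we may substitute any $Y^*\in\Rbb^N$ with $Y^*_U=y^*$ and obtain
\[
    T^\obs \;=\; T(Z^\obs,Y^\obs;C) \;=\; T(Z^\obs,Y^*;C),\qquad
    T(Z,Y^\obs;C) \;=\; T(Z,Y^*;C).
\]

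Next, I would rewrite the p-value using this substitution. For $Z\sim P(z\mid C)$ independent of $Z^\obs$ (both conditional on $C$),
\[
    \pval(Z^\obs;C) \;=\; \Exp_Z\!\left[\ind\{T(Z,Y^*;C)\ge T(Z^\obs,Y^*;C)\}\,\big|\,C\right].
\]
Now define the real-valued statistic $S(z):=T(z,Y^*;C)$ on the support of $P(z\mid C)$, which is a deterministic function. The p-value is precisely the usual survival-function-type p-value $G(S(Z^\obs))$ where $G(t):=P_{Z\sim P(\cdot\mid C)}(S(Z)\ge t\mid C)$, and $Z^\obs\sim P(\cdot\mid C)$ as well. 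The final step is the standard lemma that for any random variable $X$ with law $\mu$, the quantity $G_\mu(X)=\mu([X,\infty))$ is super-uniform, i.e., $P(G_\mu(X)\le\alpha)\le\alpha$ for all $\alpha\in(0,1)$. Applying this to $X=S(Z^\obs)$ under the conditional law gives
\[
    P\!\left(\pval(Z^\obs;C)\le\alpha\,\big|\,C\right)\le\alpha,
\]
which is the conclusion.

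The only subtle point is the passage from $Y^\obs$ to $Y^*$: this requires both properties of the setup together (the restriction of $T$ on $C$ and the sharpness of $H_0$ on $C$ via $P(C\mid z)$), and it is where the conditioning on $\{C\}$ is essential, because $Y^\obs$ is in general random but becomes a.s.\ constant on the $U$-coordinates after conditioning. Once that replacement is made, everything else is the classical randomization-test argument applied to the conditional law $P(z\mid C)$, and no further calculation is needed. I do not expect any technical difficulty beyond bookkeeping the null set where $P(C\mid Z^\obs)=0$, which has conditional probability zero by the definition of $P(z\mid C)$.
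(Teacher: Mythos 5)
Your proof is correct: conditioning on $C$, noting that $Z^\obs\mid C$ has law $P(z\mid C)\propto P(z)P(C\mid z)$, using \eqref{eq: sharp} together with the restriction of $T$ to replace $Y^\obs$ by a fixed vector $Y^*$, and then invoking super-uniformity of $\mu([X,\infty))$ for $X\sim\mu$ is exactly the standard argument. The paper itself does not prove this theorem but defers to \cite{Basse}, and your reasoning coincides with the proof given there, so there is nothing to correct.
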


Note that conditional validity of the p-value means marginal validity in the following sense:
\begin{align}
	\sum_C P(\pval(Z^\obs; C) \leq \alpha | C) P(C) \leq \sum_C \alpha P(C) = \alpha.
\end{align}

The difference from the classical randomization test (Theorem \ref{RT}) is in steps (b) and (d).
In step (b), for the observed $Z^\obs$, we select an conditioning event $C$ according to $P(C|z)$.
The null hypothesis becomes sharp on $C$, which we select according to the conditioning mechanism satisfying \eqref{eq: sharp}.
Then, in step (d), we calculate the p-value based on the conditional distribution $P(z|C) \propto P(z) P(C|z)$.
This procedure allows us to calculate the exact p-value even for the non-sharp null hypothesis in the classical sense (Definition \ref{sharp null hypothesis}).
Note that the classical randomization test corresponds to the conditioning mechanism $P(C = (\Ubb, \Zbb)|z) = 1 \ (\text{for all $z \in \Zbb$})$ as a special case.

The conditional randomization test is characterized by the conditioning mechanism $P(C|z)$, which determines the power of the test.
First of all, $P(C|z)$ must satisfy the following two requirements so that a conditional randomization test is feasible \citep{Puelz}.
First, $P(C|z)$ must satisfy \eqref{eq: sharp}, which is necessary for the null hypothesis to be sharp.
Second, we must be able to sample $z$ from $P(z|C) \propto P(C|z)P(z)$.
It is necessary to compute the p-value in step (d) by Monte Carlo approximation when the support of $P(z|C)$ is too large to compute it exactly.

%%%%%%%%%%%%%%%%%%%

\subsection{Biclique tests}
Theorem \ref{conditional RT} represents the general procedure for the conditional randomization test, but it does not reveal how to construct an appropriate conditioning mechanism $P(C|z)$.
\cite{Puelz} proposed a conditioning mechanism that can be applied generally by a graph-theoretic approach.
The concept playing a central role in their method is a {\it null exposure graph}.

\begin{defi}[Null exposure graph]
For a null hypothesis $H^{\Fcal}_0$ corresponding a exposure mapping function $f_i(z)$ and a subset of treatment exposures $\Fcal \subseteq \Fbb$, we define a bipartite graph $G^{\Fcal}_{f} = (V, E)$ as
\begin{align}
	V = \Ubb \cup \Zbb, \quad E = \{(i, z) \in \Ubb \times \Zbb \ | \ f_i(z) \in \Fcal \}.
\end{align}
The bipartite graph $G^{\Fcal}_{f}$ is a null exposure graph of $H^{\Fcal}_0$.
\end{defi}

Next, we define a complete sub-bipartite graph in a null exposure graph as a {\it biclique}.

\begin{defi}[Biclique]
A biclique of a null exposure graph $G^{\Fcal}_{f} = (V, E)$ is a pair of subsets of vertices $C = (U, \Zcal) \ (U \subseteq \Ubb, \Zcal \subseteq \Zbb)$ that satisfies
\[
	(i, z) \in E \quad (\text{for all $i \in U$ and $z \in \Zcal$}).
\]
\end{defi}

\begin{comment}
\setcounter{section}{2}
\setcounter{thm}{1}
\begin{ex}[clustered interference (cont.)]
For $\Ubb = \{1, 2, 3, 4\}$, consider the case in which there are two groupus, $\{1, 2\}$ and $\{3, 4\}$.
We have 4 differents assignments each of which gives treatment for only one unit, i.e., $\Zbb = \{z^{(1)}, z^{(2)}, z^{(3)}, z^{(4)}\}$ (e.g., $z^{(3)}=(0, 0, 1, 0)^\T$).
In this case, the null exposure graph $G^\Fcal_f$ corresponding to the null hypothesis $H^{\{0, 1\}}_0$ is givien in Figure \ref{biclique}.
Also, $C = (\{1, 3\}, \{z^{(2)}, z^{(4)}\})$ is one of the bicliques of $G^\Fcal_f$ (shown in red in the figure).
\setcounter{section}{3}
\setcounter{thm}{2}

%\begin{figure}[htbp]
%	\begin{center}
%		\includegraphics[width=8cm]{biclique.jpg}
%		\caption{Example of null exposure graph and biclique.
%		The whole bipartite graph represents the null exposure graph corresponding to the null hypothesis $H^{\{0, 1\}}$ in Example \ref{cluster}, and the red bipartite subgraph represents one of the bicliques.}
%		\label{biclique}
%	\end{center}
%\end{figure}

\end{ex}
\end{comment}

Bicliques of the null exposure graph have the following important meaning.
\begin{prop}
Let $C = (U, \Zcal)$ be a biclique of a null exposure graph $G^{\Fcal}_f$ of a null hypothesis $H^{\Fcal}_0$.
If $Z^{\obs} \in \Zcal$, then
\[
	Y_i(z) = Y_i(Z^\obs) \quad (\text{for all $i \in U$ and $z \in \Zcal$})
\]
holds under $H^{\Fcal}_0$.
\end{prop}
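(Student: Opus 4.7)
The plan is a straight definition-chase: unpack what it means to be a biclique of the null exposure graph, observe that both $z$ and $Z^\obs$ land in $\Fcal$ under $f_i$ for every $i \in U$, and then invoke $H^\Fcal_0$ directly. There is no combinatorics or probability needed; the biclique condition was designed precisely to make this implication trivial.

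Concretely, I would first fix an arbitrary $i \in U$ and an arbitrary $z \in \Zcal$. Because $C = (U, \Zcal)$ is a biclique of $G^\Fcal_f$, the edge $(i, z)$ lies in $E$, and by the definition of the null exposure graph this is the same as saying $f_i(z) \in \Fcal$. Next, I would apply the same reasoning to the pair $(i, Z^\obs)$: the hypothesis $Z^\obs \in \Zcal$ together with $i \in U$ and the biclique property gives $(i, Z^\obs) \in E$, hence $f_i(Z^\obs) \in \Fcal$.

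At this point both $f_i(z)$ and $f_i(Z^\obs)$ belong to $\Fcal$, so the triple $(i, z, Z^\obs) \in \Ubb \times \Zbb^2$ meets the premise of $H^\Fcal_0$. The null hypothesis then yields $Y_i(z) = Y_i(Z^\obs)$, and since $i \in U$ and $z \in \Zcal$ were arbitrary, the claim follows.

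The main obstacle is essentially nonexistent: there is nothing to prove beyond chaining three definitions (biclique, null exposure graph, $H^\Fcal_0$). The only subtlety worth flagging in the write-up is that one must use the hypothesis $Z^\obs \in \Zcal$ to place $Z^\obs$ on the ``assignment side'' of the biclique, since without it the edge $(i, Z^\obs)$ need not exist and the application of $H^\Fcal_0$ would fail. Because of that, I would state the two uses of the biclique property explicitly rather than folding them together.
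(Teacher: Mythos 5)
Your argument is correct and coincides with the paper's own proof: both chase the definitions of biclique and null exposure graph to conclude $f_i(z), f_i(Z^{\obs}) \in \Fcal$ and then apply $H^{\Fcal}_0$ directly. Your explicit note that $Z^{\obs} \in \Zcal$ is what licenses the edge $(i, Z^{\obs})$ is a fine touch but does not change the substance.
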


\begin{proof}
Let $i \in U$ be any unit in the biclique $C = (U, \Zcal)$.
Then, $f_i(Z^\obs) \in \Fcal$ holds by the definition of the null exposure graph.
Similarly, $f_i(z) \in \Fcal$ holds for any $z \in \Zcal$.
Therefore, $Y_i(z) = Y_i(Z^\obs)$ holds under $H^{\Fcal}_0$ by its definition.
\end{proof}

This proposition states that the null hypothesis $H^{\Fcal}_0$ is sharp on a biclique $C$.
Hence, for an observed assignment $Z^\obs$, it is an appropriate conditioning mechanism to choose a biclique $C$ containing $Z^\obs$ as a conditioning event.

Since there are usually several bicliques containing a given $Z^\obs$, we have a choice on which biclique to select.
Thus, we partition the null exposure graph into several bicliques in advance, and select a biclique based on the partition, which we call a {\it biclique decomposition}.

\begin{defi}[Biclique decomposition]
For a null exposure graph $G^{\Fcal}_f$, a biclique decomposition of $G^{\Fcal}_f$ is a set of bicliques $\Ccal = \{C_1, \dots, C_K\} \ (C_k = (U_k, \Zcal_k)\ (k = 1, \dots, K))$ that satisfies
\[
	\bigcup_{k = 1}^{K} \Zcal_k = \Zbb, \quad \Zcal_k \cap \Zcal_{k'} = \emptyset \ (k \neq k').
\]
\end{defi}
Given a biclique decomposition $\Ccal$, there is a unique biclique $C \in \Ccal$ that contains $Z^{\obs}$.
Therefore, we can consider the conditioning mechanism that we select such a $C$ as a conditioning event decisively.
This procedure is written explicitly as $P(C|z) = \ind \{z \in \Zcal(C)\}$.
Here, $\Zcal(C)$ denotes the set of assignments corresponding to a biclique $C$.

Based on the above, we show a procedure of conditional randomization tests based on a biclique decomposition, which we call {\it biclique tests}.

\begin{thm}[Biclique test \citep{Puelz}]
Let $H^{\Fcal}_0$ be a null hypothesis.
For each conditioning event $C = (U, \Zcal)$, let $T(z, Y; C)$ be a test statistic restricted on $C$.
Suppose that a biclique decomposition $\Ccal$ of the null exposure graph $G^{\Fcal}_f$ corresponding to $H^{\Fcal}_{0}$ is given.
Then, the p-value obtained from the following procedure
\begin{enumerate}
	\item Draw $Z^\obs \sim P(z)$, and observe $Y^\obs = Y(Z^\obs)$.
	\item Find the unique biclique $C = (U, \Zcal)\in \Ccal$ such that $Z^\obs \in \Zcal$.
	\item Compute $T^\obs = T(Z^\obs, Y^\obs; C)$.
	\item Compute $\pval(Z^\obs; C)
		= \Exp_{Z \sim r}\left[\ind\{T(Z, Y^\obs; C) \geq T^\obs\} \right]$.
\end{enumerate}
is conditionally valid.
That is, under $H_0$ for any $\alpha \in (0, 1)$,
\[
	P(\pval(Z^\obs; C) \leq \alpha | C) \leq \alpha
\]
holds.
Here, the expectation in step (d) is taken for $r(z) \propto P(z) \ind \{z \in \Zcal(C)\}$.
\end{thm}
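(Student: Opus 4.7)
The plan is to reduce the biclique test to the conditional randomization test (Theorem \ref{conditional RT}) by identifying the biclique decomposition $\Ccal$ with a particular deterministic conditioning mechanism and then verifying that the hypotheses of Theorem \ref{conditional RT} are met.

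First, I would define the conditioning mechanism $P(C|z) = \ind\{z \in \Zcal(C)\}$ for every $C \in \Ccal$. Because $\Ccal$ partitions $\Zbb$ through the sets $\Zcal_k$, for each $z \in \Zbb$ there is exactly one $C \in \Ccal$ with $P(C|z) = 1$, so this is a valid (degenerate) conditioning mechanism, and step (b) of the biclique test — selecting the unique $C$ containing $Z^\obs$ — is precisely a draw from $P(C|Z^\obs)$. Moreover, the requirement $z \in \Zcal$ whenever $P(C|z) > 0$ is built into the definition.

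Second, I would verify the sharpness condition \eqref{eq: sharp}. For a fixed $C = (U, \Zcal) \in \Ccal$, the set $\{z \in \Zbb : P(C|z) > 0\}$ equals $\Zcal$. Since $C$ is a biclique of the null exposure graph $G^{\Fcal}_f$, the preceding proposition yields $Y_i(z) = Y_i(Z^\obs)$ for every $i \in U$ and every $z \in \Zcal$ under $H^{\Fcal}_0$, whenever $Z^\obs \in \Zcal$. In particular, for any $z, z' \in \Zcal$ we obtain $Y_U(z) = Y_U(z')$, which is exactly \eqref{eq: sharp}. The restricted test statistic $T(z, Y; C)$ depends on $Y$ only through $Y_U$ by definition, so step (c) is well defined.

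Third, I would match the conditional distribution used in step (d) with the one appearing in Theorem \ref{conditional RT}. By Bayes' rule, $P(z|C) \propto P(z) P(C|z) = P(z) \ind\{z \in \Zcal(C)\}$, which coincides with the sampling weight $r(z)$ in the biclique test. Thus steps (a)–(d) of the biclique test are literally an instance of the conditional randomization test, and conditional validity $P(\pval(Z^\obs; C) \leq \alpha \mid C) \leq \alpha$ follows immediately from Theorem \ref{conditional RT}.

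The main — and rather modest — obstacle is the sharpness verification: one must observe that whenever $C$ is selected we have $Z^\obs \in \Zcal(C)$, so the preceding proposition applies and yields $Y_U(z) = Y_U(Z^\obs)$ simultaneously for every $z \in \Zcal(C)$, from which pairwise equality across $\Zcal(C)$ follows. Once this is in hand, the rest of the argument is a direct specialization of Theorem \ref{conditional RT}.
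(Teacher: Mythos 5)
Your reduction is correct and is essentially the intended argument: the paper itself defers the proof to \cite{Puelz}, but its surrounding text already identifies the biclique decomposition with the deterministic conditioning mechanism $P(C|z)=\ind\{z\in\Zcal(C)\}$ and uses the preceding proposition to establish sharpness on each biclique, exactly as you do, so that validity follows as a special case of Theorem \ref{conditional RT}. No gaps; your handling of the sharpness condition (pairwise equality of $Y_U$ over $\Zcal(C)$) and the identification of $P(z|C)$ with $r(z)$ are both accurate.
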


\begin{comment}
Figure \ref{biclique test fig} shows the procedure of the biclique test.
In advance, we construct the null exposure graph corresponding to the null hypothesis (left figure) and get the biclique decomposition $\Ccal$ (middle figure).
After $Z^\obs$ is observed by randomization, we perform the classical randomization test on the subset of units and assignments $\Ccal = (U, \Zcal)$, based on the distribution $r(z)$ which induced by restricting the support of $P(z)$ on $\Zcal$ (right figure).

Usually, we take $P(z)$ as a uniform distribution on $\Zbb$.
Then, the induced distribution $r(z)$ is also uniform on $\Zcal$, and we can easily calculate the p-value.
\end{comment}

%\begin{figure}[htbp]
%	\begin{center}
%		\includegraphics[width=12cm]{biclique_test.jpg}
%		\caption{The figure of the procedure of the biclique test.
%		The biclique test consists of three steps: constructing the null exposure graph (left figure), constructing the biclique decomposition (middle figure), and running the conditional randomization test (right firgure).}
%		\label{biclique test fig}
%	\end{center}
%\end{figure}

What type of biclique decomposition is desirable to perform the biclique test?
Now, letting us denote the test function by $\phi(z)$, the power of the biclique test is expressed as
\[
	\Exp_Z[\phi(Z)] = \Exp_C[\Exp_Z[\phi(Z) | C]]
	= \Exp_C[\Exp_{Z \sim r}[\phi(Z)]].
\]
Here, the expectation on $C$ is taken for $P(C) = \sum_{z \in \Zcal(C)} P(z)$.
Thus, the power of the biclique test is the average of $\Exp_{Z \sim r}[\phi(Z)]$, which is the power of the randomization test for each biclique in the biclique decomposition.
Therefore, to attain high power in the biclique test, it is important to decompose the null exposure graph so that the randomization test in each biclique has high power.

\cite{Puelz} evaluated the power of the randomization test on the biclique $C = (U, \Zcal)$ under certain assumptions: the larger the number of units and assignments, the higher the power (\cite{Puelz} Theorem 3).
According to this evaluation, it is desirable to obtain a biclique decomposition such that the size of each biclique is as large as possible.
We can obtain such a biclique decomposition by the following greedy method (Algorithm \ref{biclique decomposition}).
Let $E(G), U(G), \Zcal(G)$ be the set of edges, units, and assignments corresponding to a graph $G$, respectively.
Also, $C \in G$ denotes that $C$ is a biclique of $G$.

\begin{algorithm}[H]
\caption{Biclique decomposition algorithm}
\label{biclique decomposition}
\begin{algorithmic}[1]
	\Require null exposrure graph $G^{\Fcal}_{f}$
	\Ensure biclique decomposition $\Ccal$
	\State $\Ccal \leftarrow \emptyset, \ G \leftarrow G^{\Fcal}_{f}$ 
	\While{$|\Zcal(G)| > 0$}
		\State $C^{\ast} = \mathrm{argmax}_{C \in G} |E(C)|$
		\State $E(G) \leftarrow E(G) \setminus E(C^{\ast}), \ 
			\Zcal(G) \leftarrow \Zcal(G) \setminus \Zcal(C^{\ast})$
		\State $\Ccal \leftarrow \Ccal \cup \{C^{\ast}\}$
	\EndWhile
	\State \Return $\Ccal$
\end{algorithmic}
\end{algorithm}

The key part of this algorithm is searching for the largest biclique in step 3, which is known to be NP-hard \citep{Peeters} and computationally intractable.
Hence, we instead search for not a maximum but a maximal biclique, i.e., a biclique that no other biclique exactly contains.
We can enumerate maximal bicliques efficiently by an existing algorithm Bimax \citep{Prelic}.

%%%%%%%%%%%%%%%%%%%
\section{Improving the power of biclique tests}
\label{proposed method}

In this section, we propose a method to improve the power of the biclique test of \cite{Puelz}.
As we saw in the previous section, it is necessary to obtain a desirable biclique decomposition to attain high power in the biclique test.
\cite{Puelz} evaluated the relationship between biclique size and power and proposed the decomposition algorithm based on biclique size (Algorithm \ref{biclique decomposition}).
However, their evaluation ignores the information on the pattern of assignments, which may lead to selecting bicliques with large size but low power.
For example, suppose that when we test a contrast hypothesis $H^{\{\abf, \bbf\}}_0$, we obtained a biclique $C = (U, \Zcal)$ with the pattern shown in Figure \ref{undesirable biclique} (represented in a matrix form equivalent to a bipartite graph).
In this extreme case, for all the assignments $z \in \Zcal$, the treatment exposures to each unit $i \in U$ are completely the same.
Thus, the distribution of the test statistic degenerates to a single point and gains no power at all, no matter how large the size of the biclique.
This implies that not only the size of the biclique but also its pattern affects the power.
Therefore, we can expect to improve the power of the biclique test by evaluating the power of the randomization test more precisely and constructing a more desirable biclique decomposition based on the evaluation.

\begin{figure}[htbp]
	\begin{center}
		\includegraphics[width=5cm]{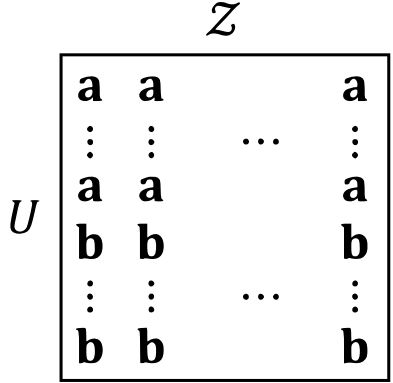}
		\caption{Example of an undesirable biclique.}
		\label{undesirable biclique}
	\end{center}
\end{figure}

In this paper, we will only focus on the contrast hypothesis.
In this case, evaluating the power of the randomization test in each biclique comes down to evaluating the power of the classical randomization test for the null hypothesis \eqref{eq: H_0}.

Since the power of the randomization test depends on the potential outcomes of all units, there are only a few results on it.
\cite{Krieger} evaluated the power of the randomization test when all the assignments are balanced (i.e., the size of the treatment and control groups are equal).
The results suggest that not only the size of the set of units and assignments but also the orthogonality of the assignment vectors affect the power.
We extend the argument of \cite{Krieger} to more general cases where the assignments are unbalanced and propose a biclique decomposition algorithm based on the evaluation.

\subsection{Power analysis of randomization tests}
\subsubsection{Derivation of  the power of randomization tests}
In this section, we denote the elements of the assignment vector $z$ by $1$ and $-1$ for convenience ($1$ : treatment, $-1$ : control).
We show the setup for the evaluation of the power in the following.
The experimental design $P(z)$ is uniform distribution over a given set of assignments $\Zbb = \{z^{(1)}, \dots, z^{(m)}\}$, i.e., $P(z^{(k)}) = 1/m \ (k = 1, \dots, m)$.
We then assume the following model for the potential outcomes:
\begin{align}
	Y(z) = \frac{\tau}{2} z + Y_0 \quad (\tau \in \Rbb, \ Y_0 \in \Rbb^N).
	\label{eq: model}
\end{align}
This is a model with a common treatment effect $\tau$ for all units, where the null hypothesis corresponds to $\tau=0$.
We refer to $Y_0$ as {\it base outcomes}.
Note that \cite{Krieger} discussed the model with covariates, but for the sake of simplicity, we will consider this simplest model without covariates.
The test statistic is the difference in means, 
\begin{align}
	T(z, Y) &= \ave\{Y_i \ | \ i \in \Ubb, \ z_i = 1\}
		- \ave\{Y_i \ | \ i \in \Ubb, \ z_i = -1\}. %\\
		%&= \frac{1}{|\{i | z_i = 1\}|} \sum_{i : z_i=1} Y_i - \frac{1}{|\{i | z_i = -1\}|} \sum_{i : z_i=-1} Y_i.
	\label{eq: ave diff}
\end{align}

The power of the randomization test with significance level $\alpha$ is expressed by the following formula:
\begin{align}
	\beta(\tau; \Zbb, Y_0)
		= \frac{1}{m} \sum_{k=1}^{m} \ind \left\{T(z^{(k)}, Y(z^{(k)})) > Q_{1-\alpha} \left(\{T(z^{(l)}, Y(z^{(k)}))\}_{l=1, \dots, m} \right) \right\}.
\end{align}
Here, $Q_{1-\alpha}(A)$ represents the lower $1-\alpha$ quantile of the set $A$, which corresponds to the $\lceil m(1-\alpha) \rceil$th number in the ascending order of $m$ elements in this case.
Thus, the power of the randomization test depends not only on the given assignment set $\Zbb$ but also on the unobserved base outcomes $Y_0$.
To make it possible to evaluate the power, we make the following assumption on $Y_0$.
\begin{ass}
\label{ass1}
The base outcome for each unit follows the same normal distribution independently:
\[
	Y_{0, i} \sim N(\mu, \sigma^2) \ i.i.d. \quad (i \in \Ubb).
\]
\end{ass}
Under this assumption, we will evaluate the average power
\begin{align}
\bar{\beta}(\tau; \Zbb) &= \Exp_{Y_0}[\beta(\tau; \Zbb, Y_0)] \\
	&= \frac{1}{m} \sum_{k=1}^{m} 
		P_{Y_0}\left(T(z^{(k)}, Y(z^{(k)})) > Q_{1-\alpha} \left(\{T(z^{(l)}, Y(z^{(k)}))\}_{l=1, \dots, m} \right) \right).
\end{align}

To make the problem simpler, we impose a further assumption on the pattern of assignments. %new
Now, for an assignment vector $z$, we define the transformed vector $\wtld{z}$ by
\[
	\wtld{z_i} = 
	\begin{dcases}
		\frac{1}{|\{j \in \Ubb| z_j = 1\}|} & (z_i = 1) \\
		-\frac{1}{|\{j \in \Ubb| z_j = -1\}|} & (z_i = -1)
	\end{dcases}
	\quad (i \in \Ubb).
\]

\begin{ass}
\label{ass2}
\begin{description}
\item[(a)] The proportion of treated units equals to some constant $p$ for all assignments in $\Zbb$:
\[
	\frac{|\{i \in \Ubb| z^{(k)}_{i} = 1\}|}{N} = p \in (0, 1) \quad (k = 1, \dots, m).
\]
\item[(b)] The inner product of the transformed assignment vector and the assignment vector equals to non-negative constant $\rho$ for all different pairs of assignments in $\Zbb$: 
\[
	\frac{\wtld{z^{(k)}} \cdot z^{(l)}}{2} = \rho \in [0, 1) \quad (k \neq l).
\]
\end{description}
\end{ass}

Here, $\rho$ is assumed to be non-negative to derive the power evaluation.
We do not consider the case $\rho = 1$, which corresponds to the situation where all the assignment vectors are the same and gain no power since the test statistic degenerates to a single point.
Note that \cite{Krieger} corresponds to the case where $p=1/2$.

The above two assumptions are a bit unrealistic, but we will discuss their validity in subsection \ref{validity}.
Under the assumptions, we get the following power evaluation.

\begin{thm}[power of randomization tests]
\label{power analysis}
We assume the model \eqref{eq: model}. Under Assumption \ref{ass1} and \ref{ass2}, the average power of the randomization test with significant level $\alpha$ using \eqref{eq: ave diff} as a test statistic is
\begin{align}
	\bar{\beta}(\tau; \Zbb)
		= \int \Fbin \left( \lfloor m\alpha \rfloor-1; \ m-1, \ \Phi(z-\Theta) \right) \phi(z) dz,
	\label{eq: power function}
\end{align}
where
\[
	\Theta = \frac{\tau}{\sigma} \sqrt{N \cdot p(1-p) \cdot (1-\rho)}, 
\]
$\Fbin(k; n, p)$ is the cumulative distribution function of a binomial distribution, and $\Phi(z)$ and $\phi(z)$ are the cumulative distribution function and probability density function of the standard normal distribution, respectively.
\end{thm}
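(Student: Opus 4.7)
The plan is to reduce the power to a counting statistic built from an exchangeable Gaussian vector and then evaluate it by conditioning on a latent $N(0,1)$. Writing the difference in means as an inner product $T(z', Y) = \wtld{z'} \cdot Y$, the model \eqref{eq: model} gives
\[
    T(z^{(l)}, Y(z^{(k)})) = \tfrac{\tau}{2}\,\wtld{z^{(l)}} \cdot z^{(k)} + X_l, \qquad X_l := \wtld{z^{(l)}} \cdot Y_0.
\]
Since $\wtld{z^{(k)}} \cdot z^{(k)} = 2$ and Assumption \ref{ass2}(b) yields $\wtld{z^{(l)}} \cdot z^{(k)}/2 = \rho$ for $l \neq k$, the ``own'' statistic equals $\tau + X_k$ while the other $m-1$ statistics equal $\tau\rho + X_l$. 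Consequently, the rejection event $T(z^{(k)}, Y(z^{(k)})) > Q_{1-\alpha}(\{T(z^{(l)}, Y(z^{(k)}))\}_{l})$ is (almost surely, since $X_l$ is continuous) equivalent to $S_k \leq \lfloor m\alpha\rfloor - 1$, where $S_k := \#\{l \neq k : X_l - X_k > \tau(1-\rho)\}$.

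Next I determine the joint law of $(X_1, \dots, X_m)$. Under Assumption \ref{ass1} it is jointly centered Gaussian (the centering uses $\sum_i \wtld{z^{(l)}}_i = 0$, which follows from Assumption \ref{ass2}(a)) with $\Var[X_l] = \|\wtld{z^{(l)}}\|^2 \sigma^2 = \sigma^2/(Np(1-p)) =: \sigma_*^2$. The key covariance computation is the identity $\wtld{z^{(k)}} \cdot \wtld{z^{(l)}} = \rho/(Np(1-p))$ for $k \neq l$: cross‐classify units by $(z^{(k)}_i, z^{(l)}_i) \in \{\pm 1\}^2$; Assumption \ref{ass2}(a) combined with the forced symmetry $n_{+-} = n_{-+}$ reduces the four cell counts to a single parameter $t$, Assumption \ref{ass2}(b) pins down $t = Np(1-p)(1-\rho)$, and a direct substitution yields the claimed inner product. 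The $X_l$'s are therefore exchangeable Gaussian with common correlation $\rho$, and I exploit this by writing $X_l = \sigma_*\bigl(\sqrt{\rho}\,Z_0 + \sqrt{1-\rho}\,W_l\bigr)$ for i.i.d.\ standard normals $Z_0, W_1, \dots, W_m$.

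This representation cancels $Z_0$ and $\sigma_*\sqrt{1-\rho}$ simultaneously: $X_l - X_k > \tau(1-\rho)$ becomes $W_l - W_k > \Theta$ with $\Theta = \tau\sqrt{1-\rho}/\sigma_* = (\tau/\sigma)\sqrt{Np(1-p)(1-\rho)}$, exactly the quantity appearing in the theorem. Conditional on $W_k = w$, the $m-1$ events $\{W_l > w + \Theta\}_{l \neq k}$ are i.i.d.\ Bernoulli with success probability $1 - \Phi(w + \Theta)$, so $S_k \mid W_k = w \sim \mathrm{Binomial}(m-1,\,1 - \Phi(w + \Theta))$. Exchangeability makes the resulting rejection probability independent of $k$, so the average over $k$ in $\bar{\beta}(\tau; \Zbb)$ is trivial; integrating over $w$ and changing variables $z = -w$ (using $\phi(-w) = \phi(w)$ and $1 - \Phi(w + \Theta) = \Phi(z - \Theta)$) produces \eqref{eq: power function}. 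The main obstacle is the covariance identity $\wtld{z^{(k)}} \cdot \wtld{z^{(l)}} = \rho/(Np(1-p))$, which is the one place Assumptions \ref{ass2}(a) and \ref{ass2}(b) must be combined carefully; once that is established, the exchangeable‐Gaussian representation and the binomial conditioning are essentially bookkeeping.
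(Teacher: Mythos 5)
Your proposal is correct and follows essentially the same route as the paper's proof: reduce the rejection event to a count of exceedances among exchangeable Gaussians, use the decomposition $\sqrt{\rho}\,Z_0+\sqrt{1-\rho}\,W_l$ to cancel the common factor, condition on the own-assignment normal to obtain a $\mathrm{Binomial}(m-1,\cdot)$ count, and integrate. The only cosmetic difference is that you justify the covariance identity $\wtld{z^{(k)}}\cdot\wtld{z^{(l)}}=\rho/(Np(1-p))$ by a cross-classification count, whereas the paper's Lemma~\ref{lem inner product} does it algebraically via $\wtld{z_i}=\{(1+z_i)/2-p\}/\{Np(1-p)\}$ and $\sum_i\wtld{z^{(k)}_i}=0$.
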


%This expression is not analytically, but we can get an approximation by numerical integration.

\subsubsection{Properties of the power evaluation formula}
According to the obtained power evaluation \eqref{eq: power function}, the number of assignments $m$ and other parameters affect the power in different ways.
While the power depends on $m$ directly, it depends on the other parameters ($N, p, \rho, \tau, \sigma$) only through the quantity $\Theta$.
In the following, we will look at the properties of the formula.
Now, we regard the right-hand side of \eqref{eq: power function} as a function of $(\Theta, m, \alpha) \in \Rbb \times \Nbb \times (0, 1)$, and denote it as $\Pcal(\Theta, m; \alpha)$.

\begin{prop}
\label{prop alpha}
Given $\Theta = 0$, $\Pcal(\Theta, m; \alpha)$ is less than or equal to $\alpha$.
In particular, $\Pcal(\Theta, m; \alpha) = \alpha$ holds when $\lfloor m\alpha \rfloor = m\alpha$.
\end{prop}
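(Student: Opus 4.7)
The plan is to interpret the right-hand side of \eqref{eq: power function} at $\Theta = 0$ as an expected value and reduce it to a familiar probability via a change of variables. First I would treat $Z$ as a standard normal random variable, so that $\phi(z)\,dz$ becomes the law of $Z$, and write
\[
\Pcal(0, m;\alpha) = \Exp_Z\left[\Fbin\bigl(\lfloor m\alpha\rfloor - 1;\ m-1,\ \Phi(Z)\bigr)\right].
\]
Setting $U = \Phi(Z)$, the probability integral transform yields $U \sim \mathrm{Uniform}(0,1)$, so
\[
\Pcal(0, m;\alpha) = \Exp_U\left[\Fbin\bigl(\lfloor m\alpha\rfloor - 1;\ m-1,\ U\bigr)\right].
\]

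The key step is to recognize the beta-binomial mixing identity: if $B \mid U \sim \mathrm{Binomial}(m-1, U)$ with $U \sim \mathrm{Uniform}(0,1)$, then unconditionally $B$ is uniformly distributed on $\{0, 1, \dots, m-1\}$. This follows from the standard beta-binomial calculation with shape parameters $(1,1)$, or equivalently from the observation that $P(B = k) = \binom{m-1}{k} \int_0^1 u^k (1-u)^{m-1-k} du = \binom{m-1}{k} B(k+1, m-k) = 1/m$ for each $k \in \{0, \dots, m-1\}$. Using this together with the tower property,
\[
\Pcal(0, m; \alpha) = \Exp_U\bigl[P(B \leq \lfloor m\alpha\rfloor - 1 \mid U)\bigr] = P(B \leq \lfloor m\alpha\rfloor - 1) = \frac{\lfloor m\alpha\rfloor}{m}.
\]

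From this explicit formula the two claims follow immediately: $\lfloor m\alpha\rfloor/m \leq m\alpha/m = \alpha$ since $\lfloor x \rfloor \leq x$, with equality exactly when $m\alpha$ is an integer, i.e.\ when $\lfloor m\alpha\rfloor = m\alpha$. I do not foresee a real obstacle here; the only point worth stating carefully is the beta-binomial identity, since everything else is a one-line consequence of the probability integral transform and the definition of the binomial CDF.
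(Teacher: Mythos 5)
Your proposal is correct and follows essentially the same route as the paper: the change of variables $t=\Phi(z)$ (your probability integral transform), the termwise beta integral $\int_0^1 t^k(1-t)^{m-1-k}\,dt = B(k+1,m-k) = \frac{1}{m}\binom{m-1}{k}^{-1}$ (your beta-binomial mixing identity), and the conclusion $\Pcal(0,m;\alpha)=\lfloor m\alpha\rfloor/m\leq\alpha$ with equality exactly when $\lfloor m\alpha\rfloor=m\alpha$. The only difference is presentational: you phrase the computation probabilistically, while the paper carries it out as a direct integration, so no substantive gap remains.
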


This implies that the probability of type I error does not exceed the significance level $\alpha$ under the null hypothesis $\tau=0$, which is one of the proofs that $\Pcal(\Theta, m; \alpha)$ is reasonable as a power evaluation formula.
Then, the following property holds for the dependence of $\Theta$.

\begin{prop}
\label{prop Theta}
$\Pcal(\Theta, m; \alpha)$ is strictly increasing for $\Theta$.
\end{prop}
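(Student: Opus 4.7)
The plan is to show strict monotonicity by differentiating $\mathcal{P}(\Theta, m; \alpha)$ under the integral sign and verifying that the resulting integrand is strictly positive almost everywhere. Set $k^{\ast} = \lfloor m\alpha \rfloor - 1$ and $n = m-1$, so that
\[
\mathcal{P}(\Theta, m; \alpha) = \int_{\Rbb} \Fbin\bigl(k^{\ast}; n, \Phi(z-\Theta)\bigr)\, \phi(z)\, dz.
\]
Since the integrand is uniformly bounded by $1$ and the partial derivative in $\Theta$ (computed below) is continuous and bounded by an integrable function of $z$, dominated convergence justifies differentiation under the integral.

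Next I would use the standard identity for the derivative of the binomial CDF in its success probability,
\[
\frac{\partial}{\partial p} \Fbin(k; n, p) = -n \binom{n-1}{k} p^{k}(1-p)^{n-1-k} \qquad (0 \leq k \leq n-1),
\]
which one verifies by differentiating $\sum_{j=0}^{k}\binom{n}{j}p^{j}(1-p)^{n-j}$ term by term and observing the telescoping cancellation. Combining this with $\partial_{\Theta}\Phi(z-\Theta) = -\phi(z-\Theta)$, the chain rule gives
\[
\frac{\partial}{\partial \Theta}\Fbin\bigl(k^{\ast}; n, \Phi(z-\Theta)\bigr) = n\binom{n-1}{k^{\ast}}\Phi(z-\Theta)^{k^{\ast}}\bigl(1-\Phi(z-\Theta)\bigr)^{n-1-k^{\ast}}\phi(z-\Theta).
\]
For every $z \in \Rbb$ this expression is strictly positive, since $0 < \Phi(z-\Theta) < 1$ and $\phi(z-\Theta) > 0$. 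Multiplying by the strictly positive weight $\phi(z)$ and integrating yields $\partial_{\Theta}\mathcal{P}(\Theta, m; \alpha) > 0$, establishing the strict monotonicity claim.

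The only subtlety, and the place where the argument is delicate, is the range of $k^{\ast}$: the identity above assumes $0 \leq k^{\ast} \leq n-1$, i.e.\ $1 \leq \lfloor m\alpha \rfloor \leq m-1$. Outside this nontrivial regime the power formula degenerates---the integrand collapses to $0$ when $\lfloor m\alpha \rfloor = 0$ and to $1$ when $\lfloor m\alpha \rfloor = m$---and strict monotonicity can only hold in the non-degenerate range. I would note this implicit restriction at the start of the proof, since it is the only genuine obstacle; beyond that, the differentiation argument is routine.
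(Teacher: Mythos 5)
Your proof is correct, but it runs along a different track than the paper's. The paper argues without any calculus: for $\Theta_1 < \Theta_2$ it notes $\Phi(z-\Theta_1) > \Phi(z-\Theta_2)$ pointwise, invokes the strict monotone decrease of $p \mapsto \Fbin(\lfloor m\alpha\rfloor - 1;\, m-1,\, p)$ to get a strict pointwise inequality between the integrands, and integrates against $\phi(z) > 0$. You instead differentiate under the integral sign, using the closed-form derivative of the binomial CDF in its success probability together with the chain rule, and conclude from strict positivity of $\partial_\Theta$ of the integrand. Both arguments hinge on exactly the same fact---$\Fbin(k;n,p)$ is strictly decreasing in $p$ for $0 \le k \le n-1$---which the paper uses as a comparison and you re-derive via its derivative. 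The paper's route is shorter and needs no justification for interchanging differentiation and integration; yours costs that (routine) dominated-convergence step but buys an explicit formula for $\partial_\Theta \Pcal(\Theta,m;\alpha)$, which could be reused to quantify the rate of power gain. Your closing caveat is a genuine and worthwhile observation: strict monotonicity requires $\lfloor m\alpha\rfloor \ge 1$, and the paper's one-line proof silently relies on the same restriction (if $\lfloor m\alpha\rfloor = 0$ the integrand is identically $0$ and the proposition fails as stated), so flagging it is an improvement rather than a defect. The other endpoint you mention, $\lfloor m\alpha\rfloor = m$, is vacuous here since $\alpha \in (0,1)$ forces $\lfloor m\alpha\rfloor \le m-1$, so only the lower degenerate case needs to be excluded.
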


This means that we can get high power when parameters $(N, p, \rho, \tau, \sigma$) take values that increase $\Theta$.
The power is higher when the number of the units $N$ is large, the balance of the assignments $p$ is close to $1/2$, the inner product of the assignments $\rho$ is close to 0, the treatment effect $\tau$ is large, and the variance $\sigma^2$ of the base outcome of the units is small.
Also, we have the following property on the dependence of $m$.

\begin{prop}
\label{prop m}
Suppose that $\Theta>0$.
When $m$ is an integer satisfying $\lfloor m\alpha \rfloor = m\alpha$, $\Pcal(\Theta, m; \alpha)$ is strictly increasing for such $m$.
\end{prop}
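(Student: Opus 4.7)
My plan is to exploit the representation
\begin{equation*}
\Pcal(\Theta, m; \alpha) = \int_0^1 \gamma_m(p)\, f_\Theta(p)\, dp, \qquad \gamma_m(p) := \Fbin(m\alpha - 1;\, m-1,\, p),
\end{equation*}
obtained from \eqref{eq: power function} by the change of variables $p = \Phi(z - \Theta)$, where $f_\Theta(p) := \phi(\Phi^{-1}(p) + \Theta)/\phi(\Phi^{-1}(p)) = \exp(-\Theta \Phi^{-1}(p) - \Theta^2/2)$ is the density of $\Phi(Z - \Theta)$ for $Z \sim N(0,1)$. Two facts drive the argument. First, Proposition \ref{prop alpha} applied at $\Theta = 0$ (where $f_0 \equiv 1$) yields $\int_0^1 \gamma_m(p)\,dp = \alpha$ for every admissible $m$, so $\int_0^1 h(p)\,dp = 0$ whenever $h := \gamma_{m'} - \gamma_m$ with $m < m'$ both satisfying $m\alpha, m'\alpha \in \Nbb$. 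Second, $f_\Theta$ is strictly decreasing on $(0,1)$ for $\Theta > 0$, which is immediate from the explicit exponential formula since $\Phi^{-1}$ is strictly increasing.

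The key step is to show that $h$ has exactly one interior sign change, from positive to negative. Writing $\gamma_m(p) = F_{V_m}(1 - p)$ with $V_m \sim \mathrm{Beta}(m(1-\alpha),\, m\alpha)$ (a consequence of the beta--binomial identity $\Fbin(k; n, p) = I_{1-p}(n-k, k+1)$), the density ratio is
\begin{equation*}
\frac{f_{V_{m'}}(x)}{f_{V_m}(x)} \;\propto\; x^{(m'-m)(1-\alpha)}\,(1-x)^{(m'-m)\alpha},
\end{equation*}
which is strictly unimodal on $(0,1)$ with peak at $x = 1-\alpha$ (both exponents are positive integers because $m'-m$ is a positive multiple of the period of $\alpha$). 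Since both densities integrate to $1$, the ratio must equal $1$ at exactly two interior points $x_1 < x_2$; consequently $f_{V_{m'}} - f_{V_m}$ has the sign pattern $-,+,-$ on $(0,x_1), (x_1,x_2), (x_2,1)$. Integrating, and using $F_{V_{m'}}(0) - F_{V_m}(0) = 0 = F_{V_{m'}}(1) - F_{V_m}(1)$, the difference $G := F_{V_{m'}} - F_{V_m}$ decreases from $0$ to a negative minimum, increases to a strictly positive maximum, then decreases back to $0$, so it has exactly one zero $x_0 \in (x_1, x_2)$, negative below and positive above. Translating by $h(p) = G(1-p)$ yields the single sign change of $h$ at $p^* := 1 - x_0$, with $h>0$ on $(0,p^*)$ and $h<0$ on $(p^*,1)$.

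The conclusion is then a Chebyshev-type manipulation: since $\int_0^1 h(p)\,dp = 0$,
\begin{equation*}
\Pcal(\Theta, m'; \alpha) - \Pcal(\Theta, m; \alpha) = \int_0^1 h(p)\,\bigl(f_\Theta(p) - f_\Theta(p^*)\bigr)\, dp,
\end{equation*}
and by the strict monotonicity of $f_\Theta$, the factors $h(p)$ and $f_\Theta(p) - f_\Theta(p^*)$ carry the same sign on each of $(0, p^*)$ and $(p^*, 1)$. Hence the integrand is pointwise nonnegative and strictly positive on a set of positive measure, giving $\Pcal(\Theta, m'; \alpha) > \Pcal(\Theta, m; \alpha)$.

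The main obstacle is establishing the single sign change of $h$. The unimodality of the density ratio makes the plan conceptually clean, but one must verify carefully that the two crossings of $f_{V_{m'}}/f_{V_m}$ with level one occur strictly inside $(0,1)$—which follows from the ratio vanishing at both endpoints combined with the fact that both densities integrate to one—so that $G$ attains a genuinely positive maximum and a genuinely negative minimum, hence a unique interior zero rather than a mere tangency.
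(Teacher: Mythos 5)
Your argument is correct and follows essentially the same route as the paper: the same change of variables to $\int_0^1 \gamma_m(t)\,g(t;\Theta)\,dt$, the same use of Proposition~\ref{prop alpha} to get $\int_0^1 h = 0$, the same single-crossing property of $h$ (your Beta-density-ratio unimodality is exactly the paper's analysis of $h'$ via the incomplete beta representation), and the same Chebyshev/mean-value comparison against the strictly decreasing weight. No substantive differences or gaps.
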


That is, the larger the number of assignments $m$ is, the higher the power tends to be.
Together with Proposition \ref{prop Theta}, this is consistent with the statement by \cite{Puelz} that the power increases as the size of the biclique increases.

Figure $\ref{power curve}$ shows the graph of $\Pcal(\Theta, m; \alpha=0.05)$.
As shown in Proposition \ref{prop Theta} and \ref{prop m}, $\Pcal$ is monotonically increasing for $\Theta$ and $m$.
Note that the change of $\Theta$ affects the power significantly, while the impact of $m$ is relatively slight.
Hence,  for a given assignment set $\Zbb$, we can state that $\Theta$ is an important quantity that characterizes the power of the randomization test.

\begin{figure}[htbp]
	\begin{center}
		\includegraphics[width=10cm]{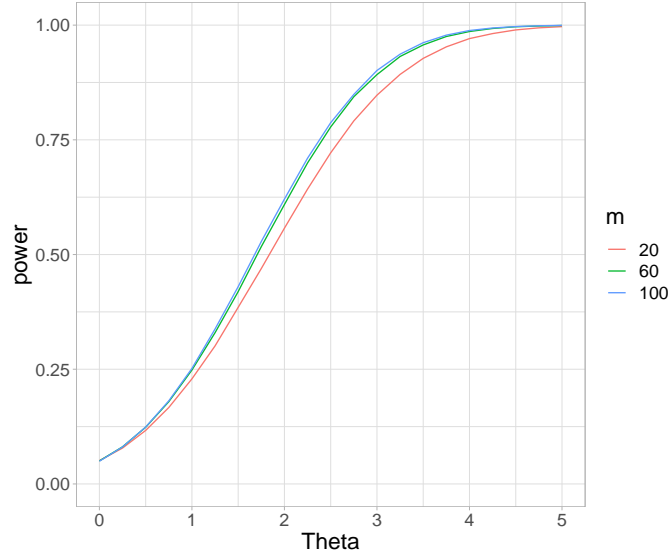}
		\caption{Graph of the power evaluation formula $\Pcal(\Theta, m; \alpha=0.05)$.}
		\label{power curve}
	\end{center}
\end{figure}

\subsubsection{Validity of the assumptions}
\label{validity}
We have derived the evaluation formula \eqref{eq: power function} under Assumption \ref{ass1} and \ref{ass2}.
In this subsection, we will examine the validity of these two assumptions.

First, we consider Assumption \ref{ass1}.
Assumption \ref{ass1} enables us to derive the power evaluation \eqref{eq: power function} and it is difficult to get the concrete expression for the power without normality on the base outcomes.
However, the result equivalent to Assumption \ref{ass1} is approximately justified when $N$ is sufficiently large, even if the base outcomes do not follow a normal distribution.

The derivation of \eqref{eq: power function} is based on the fact that $\{\wtld{z^{(k)}} \cdot Y_0\}_{k=1, \dots, m}$ follows multivariate normal distribution:
\begin{align}
	\sqrt{N} \left(
	\begin{array}{c}
		\wtld{z^{(1)}} \cdot Y_0 \\
		\vdots \\
		\wtld{z^{(m)}} \cdot Y_0
	\end{array}
	\right)
	\sim N_m (\bm{0}, \Sigma),
\end{align}
where we denote
\[
	\Sigma = \frac{\sigma^2}{p(1-p)}\left(
	\begin{array}{cccc}
		1 & \rho & \cdots & \rho \\
		\rho & 1 & \ddots & \vdots \\
		\vdots & \ddots & \ddots & \rho \\
		\rho & \cdots & \rho & 1 
	\end{array}	
	\right).
\]
This result holds asymptotically by Lindeberg-Feller central limit theorem without normality on $Y_0$.

\begin{prop}
\label{clt}
Suppose that base outcomes for each unit follow the same distribution with finite second moment independently, i.e. $Y_{0, i} \sim F \ i.i.d., \ \Exp[Y_{0, i}] = \mu, \ \Var[Y_{0,i}] = \sigma^2 \ (i = 1, \dots, N)$.
Then,
\begin{align}
	\sqrt{N} \left(
	\begin{array}{c}
		\wtld{z^{(1)}} \cdot Y_0 \\
		\vdots \\
		\wtld{z^{(m)}} \cdot Y_0
	\end{array}
	\right)
	\xrightarrow{d} N_m (\bm{0}, \Sigma) \quad (N \rightarrow \infty)
\end{align}
holds under Assumption \ref{ass2}.
\end{prop}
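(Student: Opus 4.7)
The plan is to invoke the Cram\'er--Wold device, reducing the multivariate CLT to a family of univariate CLTs, and then verify the Lindeberg--Feller condition for each linear combination. The whole argument rests on the fact that, under Assumption \ref{ass2}, each coordinate $\wtld{z^{(k)}}_i$ takes only the two values $\pm 1/(Np)$ or $\mp 1/(N(1-p))$, so the coefficients in the sum are uniformly of order $1/\sqrt{N}$.

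Fix $c = (c_1,\ldots,c_m)^{\T} \in \Rbb^m$ and set $w_i = \sqrt{N}\sum_{k=1}^{m} c_k \wtld{z^{(k)}}_i$. Then the linear combination in question is
\[
  S_N \;:=\; \sum_{k=1}^{m} c_k\,\sqrt{N}\,\wtld{z^{(k)}}\cdot Y_0 \;=\; \sum_{i=1}^{N} w_i\, Y_{0,i},
\]
a sum of independent random variables because the $Y_{0,i}$ are i.i.d. Since $\sum_i \wtld{z^{(k)}}_i = 0$ under Assumption \ref{ass2}(a), $\Exp[S_N] = 0$, and the variance equals $\sigma^2 \sum_i w_i^2 = \sigma^2 \sum_{k,l} c_k c_l \cdot N\,\wtld{z^{(k)}}\cdot\wtld{z^{(l)}}$.

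The key computation is to evaluate $N\,\wtld{z^{(k)}}\cdot\wtld{z^{(l)}}$. Under Assumption \ref{ass2}(a), the pair $(z^{(k)},z^{(l)})$ is determined up to a single counting parameter $d = |\{i : z^{(k)}_i = 1,\,z^{(l)}_i = -1\}|$, which equals $|\{i : z^{(k)}_i = -1,\,z^{(l)}_i = 1\}|$ by balance. A direct count gives $\wtld{z^{(k)}}\cdot z^{(l)}/2 = 1 - d/(Np(1-p))$, so Assumption \ref{ass2}(b) forces $d = Np(1-p)(1-\rho)$. Substituting this into the analogous expansion of $\wtld{z^{(k)}}\cdot\wtld{z^{(l)}}$ and simplifying (using $1/p^2 + 2/(p(1-p)) + 1/(1-p)^2 = 1/(p(1-p))^2$) yields $N\,\wtld{z^{(k)}}\cdot\wtld{z^{(l)}} = 1/(p(1-p))$ for $k=l$ and $\rho/(p(1-p))$ for $k \neq l$. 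Hence $\Var(S_N) = \sigma^2\, c^{\T}\Sigma c$ exactly, for every $N$.

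It remains to check the Lindeberg condition for the triangular array $\{w_i(Y_{0,i}-\mu)\}_{i=1}^N$. The bound $|\wtld{z^{(k)}}_i| \leq 1/(N\min(p,1-p))$ gives a uniform estimate $|w_i| \leq M/\sqrt{N}$, with $M$ depending only on $c$ and $p$. Therefore, on the event $\{|w_i(Y_{0,i}-\mu)| > \epsilon\,\sigma\sqrt{c^{\T}\Sigma c}\}$, one has $|Y_{0,i}-\mu| \geq \epsilon\,\sigma\sqrt{c^{\T}\Sigma c}\,\sqrt{N}/M \to \infty$, and i.i.d.\ together with the finite second moment lets the dominated convergence theorem conclude that $\sum_i \Exp[w_i^2(Y_{0,i}-\mu)^2 \ind\{|w_i(Y_{0,i}-\mu)| > \epsilon\,\sigma\sqrt{c^{\T}\Sigma c}\}] \to 0$. (If $c^{\T}\Sigma c = 0$ the limit is degenerate and the claim is trivial.) Lindeberg--Feller then gives $S_N \xrightarrow{d} N(0,\sigma^2 c^{\T}\Sigma c)$, and Cram\'er--Wold delivers the joint convergence. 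The main obstacle is the algebraic translation of Assumption \ref{ass2}(b) into the covariance identity $N\,\wtld{z^{(k)}}\cdot\wtld{z^{(l)}} = \Sigma_{kl}$; once that is settled, the probabilistic half is routine.
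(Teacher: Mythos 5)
Your argument is correct and is essentially the paper's proof: the paper applies the (multivariate) Lindeberg--Feller theorem to $\sum_{i=1}^N a_i Y_{0,i}$ with the same two ingredients you use, namely the covariance identity $N\,\wtld{z^{(k)}}\cdot\wtld{z^{(l)}} = \Sigma_{kl}/\sigma^2$ (its Lemma on inner products, which the paper derives algebraically from $\sum_i \wtld{z_i}=0$ while you derive it by directly counting the overlap $d$) and the uniform $O(1/\sqrt{N})$ bound on the coefficients, the only structural difference being that you make the standard Cram\'er--Wold reduction to a univariate CLT explicit. One bookkeeping slip to fix: since the paper's $\Sigma$ already contains the factor $\sigma^2$, your computation actually gives $\Var(S_N)=\sigma^2\sum_i w_i^2 = c^{\T}\Sigma c$ (not $\sigma^2 c^{\T}\Sigma c$), so the univariate limit is $N(0, c^{\T}\Sigma c)$, which is exactly what Cram\'er--Wold needs to return $N_m(\bm{0},\Sigma)$.
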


Next, we examine Assumption \ref{ass2}.
Assumption \ref{ass2} is hard for a given assignment set to satisfiy except for a few examples.
However, as we will see through the following numerical experiments, the evaluation \eqref{eq: power function} holds approximately even for assignment sets $\Zbb$ not satisfying Assumption \ref{ass2} by replacing $\Theta$ with a natural alternative quantity
\begin{align}
	\hat{\Theta} =
	\frac{\tau}{\sigma}\sqrt{N \cdot \hat{p}(1-\hat{p}) \cdot (1-\hat{\rho})}.
	\label{eq: Theta_hat}
\end{align}
Here, $\hat{p}$ and $\hat{\rho}$ denote
\begin{align}
	\hat{p} &= \frac{1}{m} \sum_{k=1}^{m} \frac{|\{i \in \Ubb| z^{(k)}_{i} = 1\}|}{N}, \\
	\hat{\rho} &= \max \left\{ \frac{1}{m(m-1)} \sum_{k=1}^{m} \sum_{l \neq k} 
		\frac{\wtld{z^{(l)}} \cdot z^{(k)}}{2}, \ 0 \right\} \\
\end{align}
respectively.

We perform the experiment with the following setup.
First, we generate assignment sets $\Zbb = \{z^{(1)}, \dots, z^{(m)}\}$ according to the following procedure:
\begin{align}
	z^{(k)}_i = 2B_{i, k} -1, \ B_{i, k} \sim Bern(p) \ i.i.d \quad (i = 1, \dots, N, \ k = 1, \dots, m).
	\label{eq: generate assignment}
\end{align}
However, since the value of the test statistic \eqref{eq: ave diff} is not defined for the assignment vectors whose elements all take the same value, we exclude them appropriately.
The assignment sets $\Zbb$ generated by this procedure do not satisfy Assumption \ref{ass2} in most cases.
For these $\Zbb$, we compare the actual average power $\bar{\beta}(\tau; \Zbb)$ with the estimated average power $\Pcal(\hat{\Theta}, m; \alpha)$.

We set two distributions for the base outcomes, $N(0, 1)$ and $Ex(1)$, as the cases where Assumption \ref{ass1} is satisfied and not satisfied, respectively.
We fix the treatment effect and significance level at $\tau=0.5$ and $\alpha=0.05$.
We conducted the experiment for 90 assignments generated under different combinations of $N\in\{20, 40, \dots, 300\}, m\in\{20, 60, 100\}$ and $p\in\{0.2, 0.5\}$. 
The actual average power $\bar{\beta}(\tau; \Zbb)$ is estimated by Monte Carlo approximation with 100 samples.

The results are shown in Figure \ref{simulation}.
The upper figure and lower one correspond to the case where base outcomes follow $N(0, 1)$ and $Ex(1)$ respectively.
Each plot corresponds to a randomly generated assignment set, where the horizontal axis and vertical axis represent $\hat{\Theta}$ and the actual power $\bar{\beta}(\tau; \Zbb)$ respectively.
The black curve represents the power evaluation formula $\Pcal(\hat{\Theta}, m; \alpha)$, on which each plot lies exactly when the assignment set satisfies Assumption \ref{ass1} and \ref{ass2}.

The upper figure corresponds to the situation where Assumption \ref{ass1} is satisfied, but Assumption \ref{ass2} is not.
Although each plot is slightly out of the curve, the evaluation formula $\Pcal(\hat{\Theta}, m; \alpha)$ explains the actual power $\bar{\beta}(\tau; \Zbb)$ very well.
On the other hand, in the lower figure, where Assumption \ref{ass1} are also not satisfied, the gap between the plots and curve gets larger, but the curve still roughly captures the trend of the plots.
Note that the gap almost disappears when $N$ is large, which explains the result of Proposition \ref{clt} that the evaluation formula \eqref{eq: power function} is justified in large samples even if normality does not hold.

From the above discussion, we can conclude that the power evaluation $\Pcal(\hat{\Theta}, m; \alpha)$ is still reasonable even when Assumption \ref{ass1} and \ref{ass2} are not satisfied, and that $\hat{\Theta}$, an alternative to $\Theta$, is the important quantity characterizing the power of randomization tests.

\begin{figure}[htbp]
	\begin{minipage}{1\hsize}
	  	\begin{center}
		   \includegraphics[width=14cm]{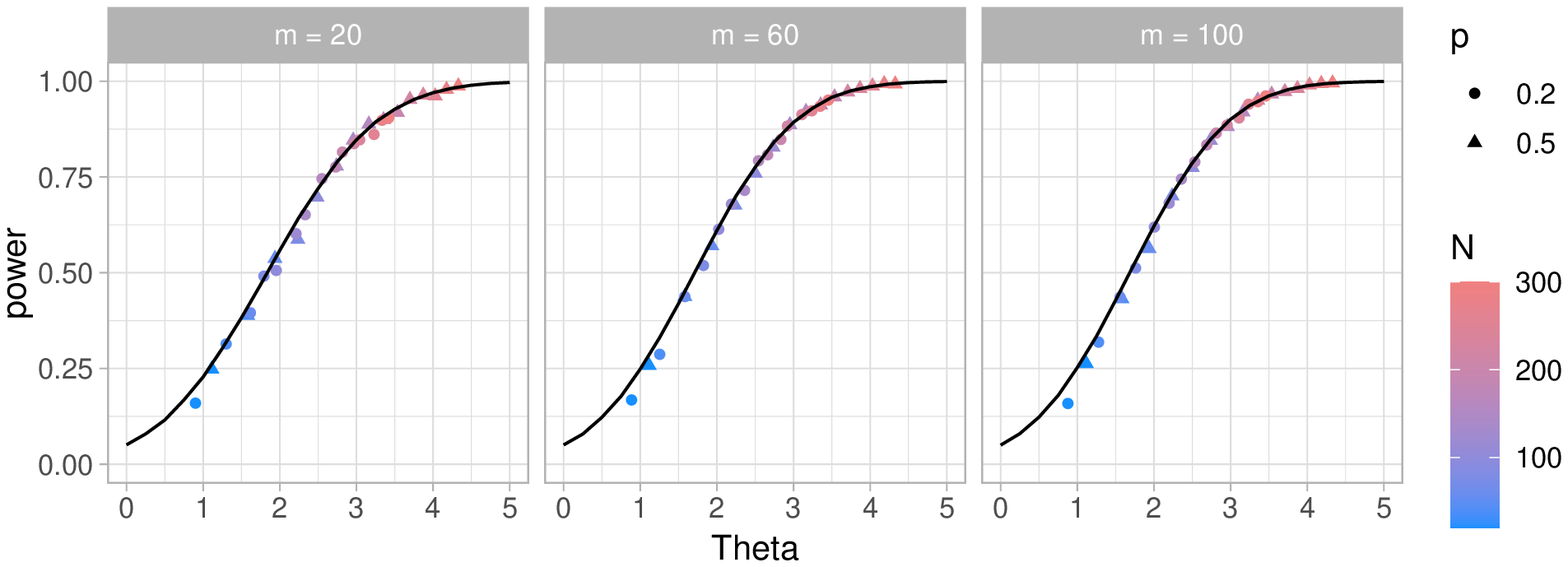}
		\end{center}
	\end{minipage}
	\begin{minipage}{1\hsize}
		\begin{center}
			\includegraphics[width=14cm]{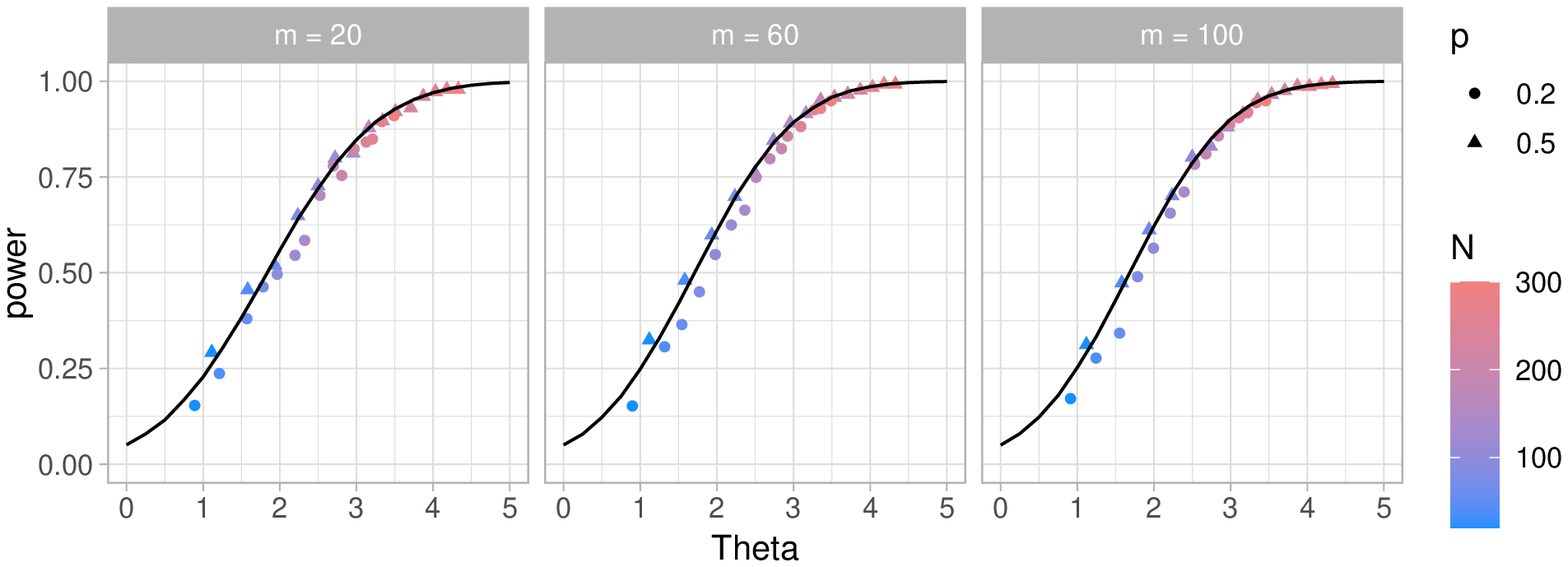}
		\end{center}
	\end{minipage}
	\caption{Comparison between the actual average power $\bar{\beta}(\tau; \Zbb)$ (plot) and the power evaluation formula $\Pcal(\hat{\Theta}, m; \alpha)$ (curve).
	The upper figure and lower one correspond to the case where base outcomes follow $N(0, 1)$ and $Ex(1)$ respectively.}
	\label{simulation}
\end{figure}

\subsection{Modified biclique decomposition algorithm}
From the discussion in the previous section, we found that $\hat{\Theta}$ characterizes the power for a given assignment set $\Zbb$.
In particular, the quantity $\hat{\Theta}_0 =\sqrt{N \cdot \hat{p}(1-\hat{p}) \cdot (1-\hat{\rho})}$, which excludes from $\hat{\Theta}$ the terms unrelated to $\Zbb$, the treatment effect $\tau$ and the variance of base outcomes $\sigma^2$, is an index expressing the ``goodness'' of $\Zbb$.

In the biclique test for the contrast hypothesis $H^{\{\abf, \bbf\}}$, we run the classical randomization test in each biclique $C$.
Hence, we can conclude that $\hat{\Theta}_0$, which is calculated from $C$, is an important quantity representing the desirability of the biclique.
In this case, $\hat{\Theta}_0$ is calculated as follows.
For example, for a biclique $C = (U, \Zcal) \ (|U| = 3, |\Zcal| = 4)$ with pattern
\[
	\left(
	\begin{array}{cccc}
		\abf & \bbf & \bbf & \abf \\
		\abf & \bbf & \abf & \bbf \\
		\bbf & \abf & \bbf & \bbf \\
	\end{array}	
	\right),
\]
we make $\abf$ and $\bbf$ correspond to $1$ and $-1$ respectively, and regard that
\begin{align}
	z^{(1)} &= (1, 1, -1)^{\T}, \\
	z^{(2)} &= (-1, -1, 1)^{\T}, \\
	z^{(3)} &= (-1, 1, -1)^{\T}, \\
	z^{(4)} &= (1, -1, -1)^{\T}.
\end{align}
Then, we calculate $\hat{\Theta}_0$ according to \eqref{eq: Theta_hat}.
In this example, we can calculate that $\hat{p} = 5/12, \ \hat{\rho} = \max \{-1/4, 0\} = 0$ and $\hat{\Theta}_0 = \sqrt{3 \cdot 5/12(1-5/12) \cdot (1-0)} \fallingdotseq 0.85$.

Thus, we can expect that the power of the biclique test can be improved by modifying Algorithm \ref{biclique decomposition} to select bicliques according to $\hat{\Theta}_0$.
The modified algorithm is shown in Algorithm \ref{modified decomposition}.
It is difficult to implement step 3 as with Algorithm \ref{biclique decomposition}.
Therefore, in practice, we enumerate maximal bicliques and choose the biclique with the largest $\hat{\Theta}_0$ among them.

This modification reduces the risk of selecting bicliques with low power despite their large size.
For example, the undesirable biclique shown in Figure \ref{undesirable biclique} is unlikely to be selected by Algorithm 2 since $\hat{\Theta}_0 = 0$ followed by $\hat{\rho} = 0$.

\begin{algorithm}[H]
\caption{Modified biclique decomposition algorithm}
\label{modified decomposition}
\begin{algorithmic}[1]
	\Require null exposrure graph $G^{\Fcal}_{f}$
	\Ensure biclique decomposition $\Ccal$
	\State $\Ccal \leftarrow \emptyset, \ G \leftarrow G^{\Fcal}_{f}$ 
	\While{$|\Zcal(G)| > 0$}
		\State $C^{\ast} = \mathrm{argmax}_{C \in G} \hat{\Theta}_0(C)$
		\State $E(G) \leftarrow E(G) \setminus E(C^{\ast}), \ 
			\Zcal(G) \leftarrow \Zcal(G) \setminus \Zcal(C^{\ast})$
		\State $\Ccal \leftarrow \Ccal \cup \{C^{\ast}\}$
	\EndWhile
	\State \Return $\Ccal$
\end{algorithmic}
\end{algorithm}

%%%%%%%%%%%%%%%%%%%
\section{Simulation}
\label{numerical experiment}

\subsection{Comparison of biclique decompositions}
\label{simulation1}
The proposed method (Algorithm \ref{modified decomposition}) should select bicliques with a larger value of $\hat{\Theta}_0$ than the existing method (Algorithm \ref{biclique decomposition}).
In this section, we will see how the difference between the two methods depends on the structure of the null exposure graph through a numerical experiment.

As a virtual null exposure graph of some contrast hypothesis $H^{\{\abf, \bbf\}}_0$, we randomly generate a bipartite graph with density $d$ and balance $b$.
Then, we construct biclique decompositions by Algorithm \ref{biclique decomposition} and \ref{modified decomposition}, and compare their features.
Here, the density of the graph is the ratio of the actual number of edges to the total number of possible edges, and the balance is the ratio of the number of edges corresponding to treatment exposure $\abf$ to that of all edges.
The size of the generated null exposure graph is $N = m = 1000$, and we conduct simulations under different parameter settings of $d\in\{0.8, 0.9\}$ and $b\in\{0.01, 0.1, 0.5\}$.

Note that, for Algorithm \ref{biclique decomposition} and \ref{modified decomposition}, searching for the biclique with the largest $|E(C)|$ or $\hat{\Theta}(C)$ in step 3 is computationally intractable.
Instead, we will approximate it by enumerating 10000 maximal bicliques by Bimax and selecting the biclique with the largest $|E(C)|$ or $\hat{\Theta}(C)$ among them.
Bimax allows the user to set the minimum size of the biclique to be enumerated, which in this case we set to be $|U|, |\Zcal| \geq 20$.
For this implementation, including the experiments in the next section, we used the R package CliqueRT \citep{CliqueRT}.

The results are shown in Figure \ref{comparing_decomposition}.
The vertical and horizontal axis represents the size of a biclique $|E(C)|$ and $\hat{\Theta}_0(C)$ respectively, and each plot corresponds to each biclique in the obtained biclique decompositions.
Here, a few outliers are excluded.
When $b=0.01, 0.1$, where the balance is unbalanced, the proposed method selects bicliques with a larger value of $\hat{\Theta}_0$ than the existing method, and the difference between the methods is especially significant when the unbalance is large.
On the other hand, when $b=0.5$, where the balance is even, the plots are located at almost the same place and there is no clear difference between the methods.
In summary, the difference between the biclique decompositions constructed by the proposed method and the existing method depends on the structure of the null exposure graph, and the difference is especially significant when the imbalance of the graph is large.

\begin{figure}[htbp]
	\begin{center}
		\includegraphics[width=14cm]{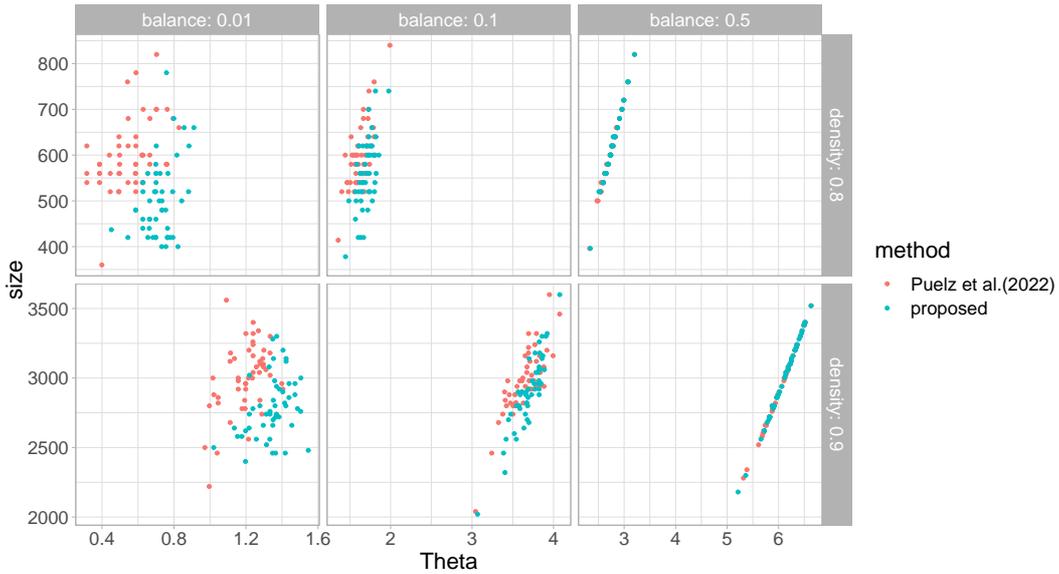}
		\caption{Comparison of biclique decompositions obtained by the existing method and proposed one.
The plots correspond to each biclique in the biclique decomposition.}
		\label{comparing_decomposition}
	\end{center}
\end{figure}

\subsection{Comparison of power}
We compare the power of the biclique test based on biclique decomposition constructed by the existing method (Algorithm \ref{biclique decomposition}) and the proposed method (Algorithm \ref{modified decomposition}).

The setup follows Example \ref{spatial}, a scenario where spatial interference exists.
Let $x_i \in \Rbb^2$ be the coordinates of each unit and let the exposure mapping function be
\begin{align}
	f_i(z) = 
	\begin{cases}
		2 & (z_i = 1) \\
		1 & (z_i = 0 \text{ and } z_j = 1, ||x_i - x_j|| \leq r \text{ for some $j \in \Ubb$}) \\
		0 & (\text{otherwise}).
	\end{cases}
\end{align}
Now, we want to test the null hypothesis $H^{\{0, 1\}}_0$.
We set the number of units to $N=1000$ and place each unit randomly according to the following procedure:
\begin{align}
	x_i \sim
	\begin{cases}
		N\left((0.5, 0.5)^{\T}, 0.1^2 I_2)\right) & (i = 1, \dots, 500) \\
		N\left((0.25, 0.75)^{\T}, 0.075^2 I_2)\right) & (i = 501, \dots, 800) \\
		N\left((0.3, 0.3)^{\T}, 0.075^2 I_2)\right) & (i = 801, \dots, 1000).
	\end{cases}
\end{align}
We generate $m=1000$ assignments according to the same procedure as \eqref{eq: generate assignment}, where the parameter is set to $p\in\{0.1, 0.2\}$.
We set the distance of interference to $r\in\{0.005, 0.01, 0.05\}$ and the significance level of the test to $\alpha=0.05$.
The potential outcome of each unit is generated according to the following procedure:
\begin{align}
	 Y_i(0) \sim N(0, 1) \ i.i.d., \quad Y_i(1) = Y_i(0) + \tau \quad (i = 1, \dots, N),
\end{align}
where the potential outcome under the treatment exposure $\abf\in\{0, 1\}$ is simply written as $Y_i(\abf)$. 
We use the difference in means as the test statistic, but if there is an assignment that give the same treatment exposure to all units, the difference in means is not defined.
Hence, we use the following modified test statistic for convenience:
\begin{align}
	T(z, Y; C) = 
	\begin{cases}
		0 \quad (f_i(z) = \abf \text{ for all $i \in U$} \ (\abf \in \{0, 1\})) \\
		\ave\{Y_i \ | \ i \in U, \ f_i(z) = 1\} - \ave\{Y_i \ | \ i \in U, \ f_i(z) = 0\} \quad (\text{otherwise}).
	\end{cases}
\end{align}

Under the above setup, we look at the average power for different $\tau$ ($\tau=0$ corresponds to the null hypothesis).
The power is estimated by Monte Carlo approximation with 100 samples.

The results are shown in Figure \ref{spatial experiment}.
We can see that the proposed method outperforms the existing method in all settings.
Note that the difference in power depends on the parameters $(p, r)$, which is due to the difference in the structure of the null exposure graph.
Table \ref{NEgraph data} shows the characteristic values (density and balance) of the null exposure graphs obtained under each parameter setting.
First, the more sparse the graph is, the greater the improvement of the proposed method on the existing method.
Second, the more unbalanced the graph is, the greater the power improvement, which is consistent with the result of section \ref{simulation1} that the difference between the two methods is significant when the imbalance of the graph is large.
In summary, when the null exposure graph is sparse and unbalanced, the power gain of the proposed method over the existing method is large.

Both density and balance of null exposure graphs are important factors that affect the power of the biclique test.
The higher the density of the graph, the larger the size of obtained bicliques, which results in the higher power of the biclique test \citep{Puelz}.
Also, when the balance of the graph is close to even, the balance of each biclique is close to even, i.e., $\hat{p}$ takes a value close to $1/2$, which leads to high power too.
In such a situation where the null exposure graph is dense or the balance is even, we can obtain sufficiently desirable bicliques either by the existing method or by the proposed one, and the improvement of the power by the proposed method is small.
On the other hand, the situation where the null exposure graph is sparse and highly unbalanced is a particularly difficult setting, which will lead to a significant loss of power if we do not select bicliques carefully.
In such a case, the existing method considering only the size of the biclique greatly loses the power, while the proposed method considering other factors such as balance and orthogonality maintains high power.
In summary, the proposed method shows higher power than the existing method, and it is especially effective in difficult situations for the existing method.

\begin{figure}[htbp]
	\begin{center}
		\includegraphics[width=14cm]{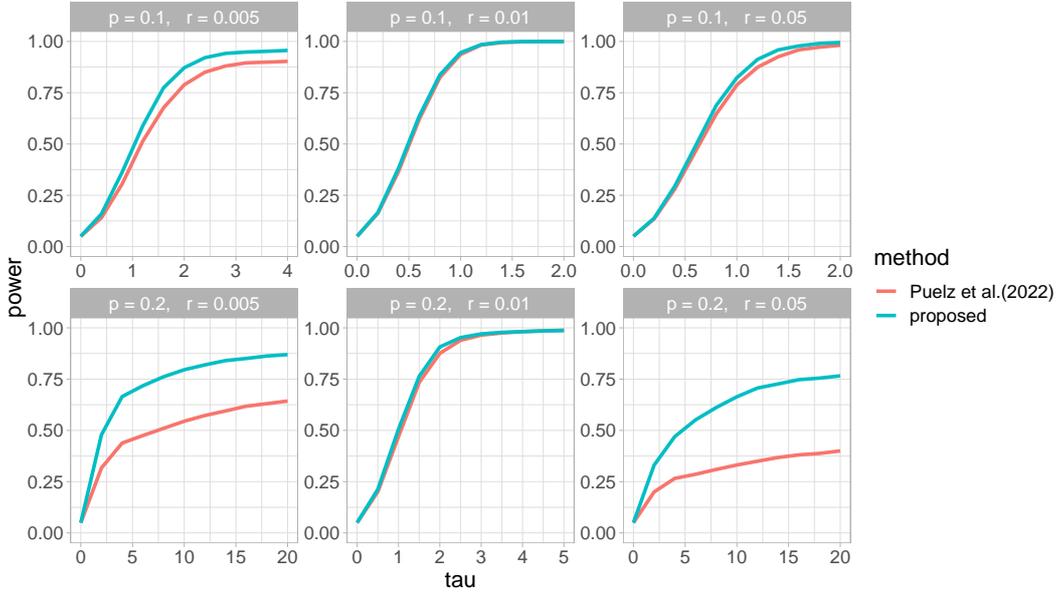}
		\caption{Comparison of power between the proposed method and the existing method in simulations of spatial interference setting.}
		\label{spatial experiment}
	\end{center}
\end{figure}

\begin{table}
	\caption{Characteristic values of the null exposure graphs obtained under each parameter setting.}
	\centering
	\label{NEgraph data}
	\fbox{%
	\begin{tabular}{c||ccc|ccc}
		$p$			& \multicolumn{3}{c|}{0.1} & \multicolumn{3}{c}{0.2} \\ \hline
		$r$        	& 0.005 & 0.01 & 0.05 & 0.005 & 0.01 & 0.005 \\ \hline \hline
		density    	& 0.900 & 0.900 & 0.900 & 0.799 & 0.800 & 0.799 \\
		balance   	& 0.030 & 0.129 & 0.877 & 0.055 & 0.228 & 0.948 \\
	\end{tabular}}
\end{table}

%%%%%%%%%%%%%%%%%%%
\section{Conclusion}
\label{conclusion}
In this paper, we proposed a method to improve the power of the biclique test \citep{Puelz} based on the power evaluation of the randomization test.
One of our main contributions is deriving a concrete expression for the power of the randomization test under several assumptions and examining its properties.
According to the derived formula, the power of the randomization test is characterized by the number of assignments $m$ and the quantity $\Theta$ that depends on the number of units, the balance of assignments, the orthogonality of assignments, the treatment effect, and the variance of outcomes.
Based on this fact, we have improved the power of the biclique test by modifying the biclique decomposition procedure to obtain more desirable bicliques.
Through simulations in a spatial interference setting, we confirmed that the proposed method shows higher power than the existing method.

There are several possible directions for future works.
First, in this paper, we derived the power of the randomization test only for the contrast hypothesis with the difference in means type test statistic.
Hence, it is not clear whether the proposed method is also effective when other test statistics are used.
In addition, the proposed method cannot be applied for any other type of null hypothesis than the contrast hypothesis.
In this sense, the scope of the proposed method is limited.
Therefore, to improve the power of the biclique test in other settings, it is necessary to investigate how to obtain a refined biclique decomposition under individual or more general settings.

In addition, the proposed method does not use any information on covariates.
However, if covariate information is available, we can use it to improve the efficiency of the inference \citep{Morgan_and_Rubin, Hennessy_etal}.
For example, it may be possible to improve the power of the biclique test by evaluating the power incorporating covariates and selecting bicliques taking the information of the covariates into account.

\appendix

\section{Proofs}

\subsection{Proofs of theorems and propositions}

We give proofs of all theorems and propositions except Theorem 1, 2, and 3.
See \cite{Basse} for the proofs of Theorem 1 and 2, and see \cite{Puelz} for that of Theorem 3.

\begin{proof}[Proof of Theorem~\ref{power analysis}]
Since the assumptions make each assignment symmetric, we can express the average power as
\begin{align}
	\bar{\beta}(\tau; \Zbb) &= \frac{1}{m} \sum_{k=1}^{m} 
		P_{Y_0}\left(T(z^{(k)}, Y(z^{(k)})) > Q_{1-\alpha} \left(\{T(z^{(l)}, Y(z^{(k)}))\}_{l=1, \dots, m} \right) \right) \\
		&= P_{Y_0}\left(T(z^{(1)}, Y(z^{(1)})) > Q_{1-\alpha} \left(\{T(z^{(k)}, Y(z^{(1)}))\}_{k=1, \dots, m} \right) \right).
\end{align}

Now, a simple calculation shows
\begin{align}
	\wtld{z^{(k)}} \cdot z^{(l)} = 
	\begin{dcases}
		2 & (k=l) \\
		2\rho & (k \neq l)
	\end{dcases}, \quad
	\wtld{z^{(k)}} \cdot \wtld{z^{(l)}} =
	\begin{dcases}
		\frac{1}{Np(1-p)} & (k=l) \\
		\frac{\rho}{Np(1-p)} & (k \neq l)
	\end{dcases}
	\label{eq: inner product}
\end{align}
under Assumption \ref{ass2} (Lemma \ref{lem inner product}).
Considering that the test statistic is expressed as
\begin{align}
	T(z^{(1)}, Y(z^{(1)})) &= \wtld{z^{(1)}} \cdot Y(z^{(1)}) =  \tau + \wtld{z^{(1)}} \cdot Y_0, \\
	T(z^{(k)}, Y(z^{(1)})) &= \wtld{z^{(k)}} \cdot Y(z^{(1)}) = \rho \tau + \wtld{z^{(k)}} \cdot Y_0 \quad (k = 2, \dots, m)
\end{align}
and the moments of normally distributed terms are
\begin{align}
\begin{aligned}
	\Exp[\wtld{z^{(k)}} \cdot Y_0]
		&= 0 \quad (k = 1, \dots, m), \\
	\Var[\wtld{z^{(k)}} \cdot Y_0]
		&= \| \wtld{z^{(k)}} \|^{2} \sigma^2 = \frac{\sigma^2}{Np(1-p)} \quad (k = 1, \dots, m), \\
	\Cov[\wtld{z^{(k)}} \cdot Y_0, \ \wtld{z^{(l)}} \cdot Y_0]
		&= \left( \wtld{z^{(k)}} \cdot \wtld{z^{(l)}} \right) \sigma^2 = \frac{\rho \sigma^2}{Np(1-p)} \quad (k \neq l),
	\label{eq: moment2}
\end{aligned}
\end{align}
the average power reduces to the following equation:
\[
	\bar{\beta}(\tau; \Zbb) = P(V_1 > Q_{1-\alpha}(\{V_k\}_{k=1, \dots, m})).
\]
Here, $V_1, \dots, V_m$ are random variables represented as
\begin{align}
	V_1 &= \tau + \tilde{\sigma} \sqrt{\rho} Z_0 + \tilde{\sigma} \sqrt{1-\rho} Z_1,\\
	V_k &= \rho \tau + \tilde{\sigma} \sqrt{\rho} Z_0 + \tilde{\sigma} \sqrt{1-\rho} Z_k \quad (k = 2, \dots, m),
\end{align}
where $Z_0, Z_1, \dots, Z_m$ follows the standard normal distribution independently and 
\[
	\tilde{\sigma} = \frac{\sigma}{\sqrt{Np(1-p)}}.
\]
This is the probability that the number of $V_k \ (k = 2, \dots, m)$ greater than or equal to $V_1$ is less than or equal to $\lfloor m\alpha \rfloor -1$. 

Now, conditioned on the event $Z_1=z$, the events $V_k \geq V_1 \ (k = 2, \dots, m)$ are each equivalent to the event
\[
	Z_k \geq z + \frac{\tau}{\tilde{\sigma}} \sqrt{1-\rho}.
\]
Since these events are conditionally independent of each other, the conditional probability $P(V_1 > Q_{1-\alpha}(\{V_k\}_{k=1, \dots, m}) | Z_1 = z)$ is equal to the probability that the success count is less than or equal to $\lfloor m\alpha \rfloor -1$ in $m-1$ independent Bernoulli trials with success probability
\[
	P\left(Z_k \geq z + \frac{\tau}{\tilde{\sigma}} \sqrt{1-\rho} \right) = \Phi \left(-z - \frac{\tau}{\tilde{\sigma}} \sqrt{1-\rho} \right),
\]
which is expressed as $\Fbin(\lfloor m\alpha \rfloor-1; \ m-1, \ \Phi(-z-\tau \sqrt{1-\rho}/\tilde{\sigma}))$.
Therefore, the average power is given by
\begin{align}
	\bar{\beta}(\tau; \Zbb) &= \int P(V_1 > Q_{1-\alpha}(\{V_l\}_{l=1, \dots, m}) | Z_1 = z) \phi(z) dz \\
	&= \int \Fbin\left(\lfloor m\alpha \rfloor-1; \ m-1, \ \Phi \left(-z-\frac{\tau}{\tilde{\sigma}} \sqrt{1-\rho} \right)\right) \phi(z) dz \\
	&= \int \Fbin(\lfloor m\alpha \rfloor-1; \ m-1, \ \Phi(z - \Theta)) \phi(z) dz.
\end{align}
\end{proof}

%%%%%%%%%%%%%%%%%%%%%

%\begin{prop}
%\label{prop alpha}
%Given $\Theta = 0$, $\Pcal(\Theta, m; \alpha)$ is less than or equal to $\alpha$.
%In particular, $\Pcal(\Theta, m; \alpha) = \alpha$ holds when $\lfloor m\alpha \rfloor = m\alpha$.
%\end{prop}

\begin{proof}[Proof of Proposition~\ref{prop alpha}]
By simple calculation,
\begin{align}
	\Pcal(0, m; \alpha)
	&= \int \Fbin \left( \lfloor m\alpha \rfloor-1; \ m-1, \ \Phi(z) \right) \phi(z) dz \\
	&= \int^{1}_{0} \Fbin \left( \lfloor m\alpha \rfloor-1; \ m-1, \ t \right) dt \\
	&= \int^{1}_{0} \sum_{k=0}^{\lfloor m\alpha \rfloor - 1} \binom{m-1}{k} t^k (1-t)^{m-k-1} dt \\
	&= \sum_{k=0}^{\lfloor m\alpha \rfloor - 1} \binom{m-1}{k} B(k+1, m-k) \\
	&= \sum_{k=0}^{\lfloor m\alpha \rfloor - 1} \frac{(m-1)!}{k! (m-k-1)!}
\cdot \frac{k! (m-k-1)!}{m!} \\
	&= \sum_{k=0}^{\lfloor m\alpha \rfloor - 1} \frac{1}{m} \\
	&= \frac{\lfloor m\alpha \rfloor}{m} \leq \alpha.
\end{align}
holds.
The equality holds if and only if $\lfloor m \alpha \rfloor = m \alpha$.
\end{proof}

%\begin{prop}
%\label{prop Theta}
%$\Pcal(\Theta, m; \alpha)$ is strictly increasing for $\Theta$.
%\end{prop}

\begin{proof}[Proof of Proposition~\ref{prop Theta}]
For $\Theta_1 < \Theta_2$ and any $z \in \Rbb$, 
\begin{align}
	\Phi(z - \Theta_1) &> \Phi(z - \Theta_2), \\
	\text{hence} \quad \Fbin(\lfloor m\alpha \rfloor-1; \ m-1, \ \Phi(z - \Theta_1))
		&< \Fbin(\lfloor m\alpha \rfloor-1; \ m-1, \ \Phi(z - \Theta_2))
\end{align}
holds.
Thus, multiplying both sides by $\phi(z)$ and integrating over $z$ yields $\Pcal(\Theta_1, m; \alpha) < \Pcal(\Theta_2, m; \alpha)$.
\end{proof}

%\begin{prop}
%\label{prop m}
%Suppose that $\Theta>0$.
%When $m$ is an integer satisfying $\lfloor m\alpha \rfloor = m\alpha$, $\Pcal(\Theta, m; \alpha)$ is strictly increasing for such $m$.
%\end{prop}

\begin{proof}[Proof of Proposition~\ref{prop m}]
We will show $\Pcal(\Theta, m_1; \alpha) < \Pcal(\Theta, m_2; \alpha)$ for integers $m_1 < m_2$ satisfying $\lfloor m_i\alpha \rfloor = m_i\alpha \ (i=1, 2)$.
We calculate
\begin{align}
	\Pcal(\Theta, m; \alpha)
	&= \int \Fbin \left( \lfloor m\alpha \rfloor-1; \ m-1, \ \Phi(z-\Theta) \right) \phi(z) dz \\
	&= \int_{0}^{1} \Fbin \left( \lfloor m\alpha \rfloor-1; \ m-1, \ t \right) \frac{\phi(\Phi^{-1}(t) + \Theta)}{\phi(\Phi^{-1}(t))} dt
\end{align}
and define
\[
	g(t; \Theta) = \frac{\phi(\Phi^{-1}(t) + \Theta)}{\phi(\Phi^{-1}(t))}.
\]
Then, we can show that $g(t; \Theta)$ is strictly decreasing for $t$ (Lemma \ref{g property}).
Next, for
\begin{align}
	\Pcal&(\Theta, m_2; \alpha) - \Pcal(\Theta, m_1; \alpha) \\
	&= \int_{0}^{1} \Bigl\{\Fbin(\lfloor m_2\alpha \rfloor-1; \ m_2-1, \ t) - \Fbin (\lfloor m_1\alpha \rfloor-1; \ m_1-1, \ t) \Bigr\} g(t; \Theta) dt
\end{align}
we define
\[
	h(t) = \Fbin(\lfloor m_2\alpha \rfloor-1; \ m_2-1, \ t) - \Fbin (\lfloor m_1\alpha \rfloor-1; \ m_1-1, \ t).
\]
The function $h(t)$ satisfies the following two properties (Lemma \ref{h property} (\cite{Krieger})):
\begin{description}
	\item[$(\mathrm{i})$] There exists $t_0 \in (0, 1)$ such that $h(t)>0 \ (t<t_0), \ h(t)<0 \ (t>t_0)$.
	\item[$(\mathrm{ii})$] $\int_{0}^{1}h(t)dt = 0$.
\end{description}

Therefore, we have
\begin{align}
	\Pcal(\Theta, m_2; \alpha) - \Pcal(\Theta, m_1; \alpha)
	 &= \int_{0}^{t_0} g(t; \Theta) h(t) dt + \int_{t_0}^{1} g(t; \Theta) h(t) dt \\
	 &> g(t_0; \Theta) \int_{0}^{t_0} h(t) dt + g(t_0; \Theta) \int_{t_0}^{1} h(t) dt \\
	 &= g(t_0; \Theta) \int_{0}^{1} h(t) \\
	 &= 0.
\end{align}
\end{proof}

%\begin{prop}
%\label{clt}
%Suppose that base outcomes for each unit follow the same distribution with finite second moment independently, i.e. $Y_{0, i} \sim F \ i.i.d., \ \Exp[Y_{0, i}] = \mu, \ \Var[Y_{0,i}] = \sigma^2 \ (i = 1, \dots, N)$.
%Then,
%\begin{align}
%	\sqrt{N} \left(
%	\begin{array}{c}
%		\wtld{z^{(1)}} \cdot Y_0 \\
%		\vdots \\
%		\wtld{z^{(m)}} \cdot Y_0
%	\end{array}
%	\right)
%	\xrightarrow{d} N_m (\bm{0}, \Sigma) \quad (N \rightarrow \infty)
%\end{align}
%holds under Assumption \ref{ass2}.
%Here, we denote
%\[
%	\Sigma = \frac{\sigma^2}{p(1-p)}\left(
%	\begin{array}{cccc}
%		1 & \rho & \cdots & \rho \\
%		\rho & 1 & \ddots & \vdots \\
%		\vdots & \ddots & \ddots & \rho \\
%		\rho & \cdots & \rho & 1 
%	\end{array}	
%	\right).
%\]
%\end{prop}

\begin{proof}[Proof of Proposition~\ref{clt}]
We will show it by the Lindeberg-Feller theorem.
Define the matrix $A \in \Rbb^{m \times N}$ as $A = \sqrt{N}\left(\wtld{z^{(1)}}, \dots, \wtld{z^{(m)}}\right)^{\T}$, and let its column vectors be $a_1, \dots, a_N$.
The claim to be shown is equivalent to
\[
	AY_0 = \sum_{i=1}^{N} a_i Y_{0,i} \xrightarrow{d} N_{m}(\bm{0}, \Sigma) \quad (N \rightarrow \infty).
\]
The mean and variance of multivariate normal distribution to converge will be
\begin{align}
	\sum_{i=1}^{N} \Exp \left[a_i Y_{0,i}\right]
	&= \Exp \left[\sum_{i=1}^{N} a_i Y_{0,i}\right]
	= E[AY_0] = \mu A \ind = \bm{0}, \\
	\sum_{i=1}^{N} \Var \left[a_i Y_{0,i}\right] 
	&= \Var \left[\sum_{i=1}^{N} a_i Y_{0,i}\right] 
	= \Var[AY_0]
	= A \Var[Y_0] A^{\T} = \sigma^2 A A^{\T} = \Sigma.
\end{align}
respectively.
Now, it suffices to check Lindeberg's condition
\[
	\sum_{i=1}^{N} \Exp \left[||a_i Y_{0,i}||^{2} \ind\{||a_i Y_{0,i}|| > \epsilon\} \right] \rightarrow 0 \quad (for \ all \ \epsilon > 0).
\]
It follows by
\begin{align}
	\max_{1 \leq i \leq N} ||a_i|| \leq \sqrt{m \left(\max\left\{ \frac{\sqrt{N}}{Np}, \frac{\sqrt{N}}{N(1-p)} \right\} \right)^2}
	= \sqrt{\frac{m}{N (\min\{p, 1-p\})^2 }}
	\rightarrow 0 \quad (N \rightarrow \infty).
\end{align}
In fact, when the above condition holds, we have 
\begin{align}
	\sum_{i=1}^{N} \Exp \left[||a_i Y_{0,i}||^{2} \ind\{||a_i Y_{0,i}|| > \epsilon\} \right]
	&= \sum_{i=1}^{N} ||a_i||^2 \Exp \left[Y_{0,i}^2 \ind\{|Y_{0,i}| >\epsilon/||a_i|| \} \right] \\
	&\leq \sum_{i=1}^{N} ||a_i||^2 \max_{1 \leq i \leq N} \Exp \left[Y_{0,i}^2 \ind\{|Y_{0,i}| >\epsilon / ||a_i|| \} \right] \\
	&= \tr(AA^{\T}) \Exp \left[Y_{0,1}^2 \ind\{|Y_{0,1}| >\epsilon / \max_{1 \leq i \leq N} ||a_i|| \} \right] \\
	&= \frac{m}{p(1-p)} \Exp \left[Y_{0,1}^2 \ind\{|Y_{0,1}| >\epsilon / \max_{1 \leq i \leq N} ||a_i|| \} \right] \\
	&\rightarrow 0 \quad (N \rightarrow \infty)
\end{align}
for any $\epsilon > 0$ and Lindeberg's condition holds.
\end{proof}

\subsection{Lemmas}

\begin{lem}
\label{lem inner product} 
\begin{align}
	\wtld{z^{(k)}} \cdot \wtld{z^{(l)}} =
	\begin{dcases}
		\frac{1}{Np(1-p)} & (k=l) \\
		\frac{\rho}{Np(1-p)} & (k \neq l)
	\end{dcases}
\end{align}
holds under Assumption \ref{ass2}.
\end{lem}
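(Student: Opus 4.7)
The plan is to prove the two cases of the lemma separately, using the explicit definition of $\wtld{z}$ together with the counting given by Assumption~\ref{ass2}(a) and the inner product condition in Assumption~\ref{ass2}(b).

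For the diagonal case $k=l$, I would simply count. Under Assumption~\ref{ass2}(a), exactly $Np$ coordinates of $z^{(k)}$ equal $1$ and $N(1-p)$ equal $-1$. Plugging the two possible values of $\wtld{z_i^{(k)}}$ into $\sum_i (\wtld{z_i^{(k)}})^2$ gives $Np \cdot (1/Np)^2 + N(1-p) \cdot (1/N(1-p))^2 = 1/(Np) + 1/(N(1-p))$, which combines to $1/(Np(1-p))$.

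For the off-diagonal case $k \ne l$, the key step is to observe that $\wtld{z_i^{(l)}}$ is an \emph{affine} function of $z_i^{(l)} \in \{-1,+1\}$. Solving the two-point interpolation yields
\begin{align}
\wtld{z_i^{(l)}} \;=\; \frac{1}{2Np(1-p)}\, z_i^{(l)} \;+\; \frac{1-2p}{2Np(1-p)}.
\end{align}
Taking the inner product with $\wtld{z^{(k)}}$ and using Assumption~\ref{ass2}(a) to get $\sum_i \wtld{z_i^{(k)}} = Np \cdot (1/Np) + N(1-p)\cdot(-1/N(1-p)) = 0$, the constant term drops out, leaving $\wtld{z^{(k)}}\cdot\wtld{z^{(l)}} = \frac{1}{2Np(1-p)}\, \wtld{z^{(k)}}\cdot z^{(l)}$. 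Finally, Assumption~\ref{ass2}(b) gives $\wtld{z^{(k)}}\cdot z^{(l)} = 2\rho$, so the expression reduces to $\rho/(Np(1-p))$, as claimed.

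No step here is a genuine obstacle; the only subtlety is noticing the affine representation of $\wtld{z^{(l)}}$ in terms of $z^{(l)}$, which converts Assumption~\ref{ass2}(b) (an inner product involving $z^{(l)}$) into the desired inner product involving $\wtld{z^{(l)}}$. Everything else is direct counting under Assumption~\ref{ass2}(a).
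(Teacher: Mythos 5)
Your proof is correct and follows essentially the same route as the paper's: the affine representation $\wtld{z_i^{(l)}} = \frac{(1+z_i^{(l)})/2 - p}{Np(1-p)}$, the identity $\sum_i \wtld{z_i^{(k)}} = 0$, and Assumption~\ref{ass2}(b) to reduce everything to $\wtld{z^{(k)}}\cdot z^{(l)}$. The only cosmetic difference is that you treat the diagonal case $k=l$ by direct counting, while the paper's single computation covers it via $\wtld{z^{(k)}}\cdot z^{(k)}/2 = 1$; both are fine.
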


\begin{proof}
Under Assumption \ref{ass2}(a), we can write that
\begin{align}
	\wtld{z_i} = 
	\begin{dcases}
		\frac{1}{Np} & (z_i = 1) \\
		-\frac{1}{N(1-p)} & (z_i = -1)
	\end{dcases}
	= \frac{(1+z_i)/2 - p}{Np(1-p)}.
\end{align}
Also we have $\sum_{i=1}^{N} \wtld{z^{(k)}_i} = 0$ and $\wtld{z^{(k)}} \cdot z^{(l)}/2 = \rho$ under Assumption \ref{ass2}(b).
Then,
\begin{align}
	\wtld{z^{(k)}} \cdot \wtld{z^{(l)}}
	&= \sum_{i=1}^{N} \wtld{z^{(k)}_i} \frac{(1+z^{(l)}_i)/2 - p}{Np(1-p)} \\
	&= \frac{1}{Np(1-p)} \sum_{i=1}^{N} \frac{\wtld{z^{(k)}_i} z^{(l)}_i}{2}
		+ \frac{1/2 - p}{Np(1-p)} \sum_{i=1}^{N} \wtld{z^{(k)}_i} \\
	&= \frac{\rho}{Np(1-p)}
\end{align}
holds.
\end{proof}

\begin{lem}
\label{g property}
Given $\Theta > 0$,
\[
	g(t; \Theta) = \frac{\phi(\Phi^{-1}(t) + \Theta)}{\phi(\Phi^{-1}(t))}
		\quad (0 < t < 1)
\]
is strictly decreasing for $t$.
\end{lem}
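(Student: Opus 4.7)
The plan is to rewrite $g(t;\Theta)$ in closed form using the explicit Gaussian density, at which point monotonicity becomes immediate. Since $\phi(z) = (2\pi)^{-1/2} e^{-z^2/2}$, the normalizing constants cancel in the ratio and the quadratic terms simplify via $(x+\Theta)^2 - x^2 = 2\Theta x + \Theta^2$, giving
\[
    \frac{\phi(x+\Theta)}{\phi(x)} = \exp\!\left(-\Theta x - \frac{\Theta^2}{2}\right)
\]
for every $x \in \Rbb$.

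Setting $x = \Phi^{-1}(t)$ then yields the compact representation
\[
    g(t;\Theta) = \exp\!\left(-\Theta\, \Phi^{-1}(t) - \frac{\Theta^2}{2}\right), \quad t \in (0,1).
\]
The function $\Phi : \Rbb \to (0,1)$ is a strictly increasing continuous bijection, so its inverse $\Phi^{-1}$ is strictly increasing on $(0,1)$. Because $\Theta > 0$, the exponent $-\Theta\,\Phi^{-1}(t) - \Theta^2/2$ is therefore strictly decreasing in $t$, and composing with the strictly increasing function $\exp(\cdot)$ preserves strict monotonicity. Hence $g(t;\Theta)$ is strictly decreasing in $t$ on $(0,1)$.

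There is essentially no obstacle: the whole argument rests on the algebraic identity for the log-ratio of two shifted Gaussian densities, together with the monotonicity of $\Phi^{-1}$. The only thing to be careful about is stating the domain of $\Phi^{-1}$ explicitly and noting the sign assumption $\Theta > 0$, without which $g$ would be constant (for $\Theta = 0$) or strictly increasing (for $\Theta < 0$).
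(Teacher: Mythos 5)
Your proof is correct and follows essentially the same route as the paper: both write the ratio of shifted Gaussian densities as $\exp\left(-\Theta\,\Phi^{-1}(t) - \Theta^2/2\right)$ and conclude by the strict monotonicity of $\Phi^{-1}$ (and of $\exp$), given $\Theta>0$. Your remark on the sign of $\Theta$ is a harmless addition but not needed for the stated lemma.
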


\begin{proof}
Putting $u = \Phi^{-1}(t)$,
\[
	g(t(u); \Theta) = \exp \left( -\frac{(u+\Theta)^2}{2} + \frac{u^2}{2} \right) 
	= \exp \left(-\Theta u - \frac{\Theta^2}{2} \right)
\]
is strictly decreasing for $u$.
Since $u$ is increasing function of $t$, $g(t; \Theta)$ is strictly decreasing for $t$.
\end{proof}

\begin{lem}[\cite{Krieger}]
\label{h property}
Suppose that $m_1, m_2$ are integers satisfying $\lfloor m_i\alpha \rfloor = m_i\alpha \ (i=1, 2)$ and $m_1 < m_2$.
Then, 
\[
	h(t) = \Fbin(\lfloor m_2\alpha \rfloor-1; \ m_2-1, \ t) - \Fbin (\lfloor m_1\alpha \rfloor-1; \ m_1-1, \ t) \quad (0 \leq t \leq 1)
\]
satisfies the following two properties:
\begin{description}
	\item[$(\mathrm{i})$] There exists $t_0 \in (0, 1)$ such that $h(t)>0 \ (t<t_0), \ h(t)<0 \ (t>t_0)$.
	\item[$(\mathrm{ii})$] $\int_{0}^{1}h(t)dt = 0$.
\end{description}
\end{lem}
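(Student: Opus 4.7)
The plan is to address the two properties in turn. Property (ii) follows by direct computation: using the same beta-function identity already applied in the proof of Proposition~\ref{prop alpha}, term-by-term integration gives $\int_0^1 \Fbin(\lfloor m_i\alpha \rfloor - 1;\, m_i - 1,\, t)\, dt = \lfloor m_i\alpha \rfloor / m_i = \alpha$ for $i = 1, 2$ under the divisibility assumption, so the two contributions to $\int_0^1 h(t)\, dt$ cancel.

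For property (i), I would first pin down the boundary behavior of $h$. Since $\Fbin(k; n, 0) = 1$ and $\Fbin(k; n, 1) = 0$ whenever $0 \le k < n$, one has $h(0) = h(1) = 0$. A leading-order expansion shows $\Fbin(m_i\alpha - 1;\, m_i - 1,\, t) = 1 - \binom{m_i - 1}{m_i\alpha} t^{m_i\alpha} + O(t^{m_i\alpha + 1})$ as $t \to 0$ and $\Fbin(m_i\alpha - 1;\, m_i - 1,\, t) = \binom{m_i - 1}{m_i\alpha - 1}(1-t)^{m_i(1-\alpha)} + O((1-t)^{m_i(1-\alpha)+1})$ as $t \to 1$. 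Because $m_1 < m_2$, in both expansions the term with index $m_1$ dominates, giving $h(t) > 0$ for $t$ near $0$ and $h(t) < 0$ for $t$ near $1$; in particular $h$ has at least one zero in $(0, 1)$.

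To show uniqueness of the sign change, I would count critical points of $h$. The standard identity $\frac{d}{dt}\Fbin(k;\, n,\, t) = -n\binom{n-1}{k} t^k (1-t)^{n-1-k}$ expresses $h'(t)$ as a difference of two monomials in $t$ and $1-t$. Factoring out $t^{m_1\alpha - 1}(1-t)^{m_1(1-\alpha) - 1}$ reduces the equation $h'(t) = 0$ to $\bigl[\, t^\alpha (1-t)^{1-\alpha}\,\bigr]^{m_2 - m_1} = c$ for an explicit positive constant $c$ depending on $(m_1, m_2, \alpha)$. Since $t \mapsto t^\alpha(1-t)^{1-\alpha}$ is strictly unimodal on $(0,1)$ with unique maximum at $t = \alpha$, this equation has at most two solutions, so $h'$ has at most two zeros in $(0, 1)$. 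By Rolle's theorem, if $h$ had two or more zeros in $(0, 1)$ in addition to the boundary zeros $h(0) = h(1) = 0$, then $h'$ would have at least three zeros in $(0, 1)$, contradicting the previous bound. Hence $h$ has exactly one zero $t_0 \in (0, 1)$, and combined with the endpoint sign information this gives $h > 0$ on $(0, t_0)$ and $h < 0$ on $(t_0, 1)$.

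The main obstacle is the algebraic reduction of $h'(t) = 0$ to the unimodal equation in $t^\alpha(1-t)^{1-\alpha}$, which requires careful bookkeeping of binomial coefficients and exponents to see that the emergent power is exactly $m_2 - m_1$; everything else is routine calculus plus a single application of Rolle's theorem.
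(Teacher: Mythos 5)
Your proposal is correct and follows essentially the same route as the paper: property (ii) is the same beta-integral computation (the paper routes it through Proposition~\ref{prop alpha} with $\Theta=0$), and for property (i) both arguments hinge on differentiating the binomial CDFs and reducing $h'(t)=0$ to the equation $t^{(m_2-m_1)\alpha}(1-t)^{(m_2-m_1)(1-\alpha)}=c$, whose unimodal left-hand side admits at most two roots. The only difference is in the finish: you combine the endpoint sign expansions with Rolle's theorem to bound the interior zeros of $h$ by one, whereas the paper reads off the increase--decrease--increase pattern of $h$ from the two critical points together with $h(0)=h(1)=0$; both are valid.
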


\begin{proof}
For simplicity of notation, let $k_i = \lfloor m_i\alpha \rfloor - 1 = m_i\alpha -1$ and $n_i = m_i-1 \ (i = 1, 2)$.
Using the fact that the cumulative distribution function of the binomial distribution is
\begin{align}
	\Fbin(k; n, p) &= I_{1-p}(n-k, \ k+1) = \frac{B(1-p; \ n-k, \ k+1)}{B(n-k, \ k+1)} \\
	&= \int_{0}^{1-p} \frac{\theta^{n-k-1}(1-\theta)^{k}}{B(n-k, \ k+1)} d\theta,
\end{align}
where $I_x(a, b)$ is regularized beta function  (the ratio of incomplete beta function $B(x; a, b)$ to beta function $B(a, b)$), we can express that 
\begin{align}
	h(t) = \int^{1-t}_{0} \left\{ \frac{\theta^{n_2-k_2-1} (1-\theta)^{k_2}}{B(n_2-k_2, \ k_2+1)} 
		- \frac{\theta^{n_1-k_1-1} (1-\theta)^{k_1}}{B(n_1-k_1, \ k_1+1)} \right\} d\theta.
\end{align}
Differentiating by $t$ to see the behavior of $h(t)$, we have
\[
	h'(t) = - \frac{(1-t)^{n_2-k_2-1} \ t^{k_2}}{B(n_2-k_2, \ k_2+1)}
		+ \frac{(1-t)^{n_1-k_1-1} \ t^{k_1}}{B(n_1-k_1, \ k_1+1)}.
\]
A simple calculation shows that the condition for $h'(t) = 0$ is 
\[
	(1-t)^{(n_2-n_1)-(k_2-k_1)} \ t^{k_2-k_1} = \frac{B(n_2-k_2, \ k_2+1)}{B(n_1-k_1, \ k_1+1)}.
\]
Here, since both $(n_2-n_1) - (k_2-k_1)$ and $k_2 - k_1$ are positive integers, the function on the left-hand side is (a) 0 for $t=0, 1$, (b) positive for $0<t<1$, and (c) unimodal.
Thus, there are two points satisfying the above equation (if not, $h(t)$ is monotonically increasing or decreasing on $[0, 1]$, which contradicts $h(0) = h(1) = 0$).
Putting these two points as $t_1$ and $t_2 \ (0<t_1<t_2<1)$, $h(t)$ is a function that increases in $[0, t_1) \cup (t_2, 1]$ and decreases in $(t_1, t_2)$.
Therefore, together with $h(0) = h(1) =0$, there exists some $t_0 \in (t_1, t_2)$ such that $h(t) > 0 \ (0 < t < t_0), \ h(t) < 0 \ (t_0 < t < 1)$.

Considering when $\Theta=0$, Proposition \ref{prop alpha} shows that $\Pcal(0, m_1; \alpha) = \Pcal(0, m_2; \alpha) = \alpha$, and hence
\[
	0 = \Pcal(0, m_2; \alpha) - \Pcal(0, m_1; \alpha) 
	= \int_{0}^{1} h(t) g(t; 0) dt = \int_{0}^{1} h(t) dt
\]
follows.
\end{proof}

%% mybibliography
%\bibliographystyle{abbrvnat}
%%\bibliography{library,mybib,citeulike-8901247} % When using BiBTeX 
%\bibliography{../myrefs}% 文献データベースファイル
%
%\end{document}

% mybibliography
\bibliographystyle{rss}
%\bibliography{library,mybib,citeulike-8901247} % When using BiBTeX 
%\bibliography{../myrefs}% 文献データベースファイル

\begin{thebibliography}{16}
\expandafter\ifx\csname natexlab\endcsname\relax\def\natexlab#1{#1}\fi
\expandafter\ifx\csname url\endcsname\relax
  \def\url#1{\texttt{#1}}\fi
\expandafter\ifx\csname urlprefix\endcsname\relax\def\urlprefix{URL: }\fi

\bibitem[{Aronow(2012)}]{Aronow}
Aronow, P.~M. (2012) A general method for detecting interference between units
  in randomized experiments.
\newblock \textit{Sociological Methods \& Research}, \textbf{41}, 3--16.

\bibitem[{Aronow and Samii(2017)}]{Aronow_and_Samii}
Aronow, P.~M. and Samii, C. (2017) {Estimating average causal effects under
  general interference, with application to a social network experiment}.
\newblock \textit{The Annals of Applied Statistics}, \textbf{11}, 1912--1947.

\bibitem[{Athey et~al.(2018)Athey, Eckles and Imbens}]{Athey}
Athey, S., Eckles, D. and Imbens, G.~W. (2018) Exact p-values for network
  interference.
\newblock \textit{Journal of the American Statistical Association},
  \textbf{113}, 230--240.

\bibitem[{Basse et~al.(2019)Basse, Feller and Toulis}]{Basse}
Basse, G.~W., Feller, A. and Toulis, P. (2019) {Randomization tests of causal
  effects under interference}.
\newblock \textit{Biometrika}, \textbf{106}, 487--494.

\bibitem[{Cox(1958)}]{Cox}
Cox, D.~R. (1958) \textit{Planning of Experiments}.
\newblock New York: Wiley.

\bibitem[{Fisher(1935)}]{Fisher}
Fisher, R.~A. (1935) \textit{The Design of Experiments}.
\newblock Edinburgh: Oliver and Boyd.

\bibitem[{Halloran and Hudgens(2016)}]{Halloran_and_Hudgens}
Halloran, M.~E. and Hudgens, M.~G. (2016) Dependent happenings: A recent
  methodological review.
\newblock \textit{Current Epidemiology Reports}, \textbf{3}, 297--305.

\bibitem[{Hennessy et~al.(2016)Hennessy, Dasgupta, Miratrix, Pattanayak and
  Sarkar}]{Hennessy_etal}
Hennessy, J., Dasgupta, T., Miratrix, L., Pattanayak, C. and Sarkar, P. (2016)
  A conditional randomization test to account for covariate imbalance in
  randomized experiments.
\newblock \textit{Journal of Causal Inference}, \textbf{4}, 61--80.

\bibitem[{Krieger et~al.(2020)Krieger, Azriel, Sklar and Kapelner}]{Krieger}
Krieger, A.~M., Azriel, D., Sklar, M. and Kapelner, A. (2020) Improving the
  power of the randomization test.
\newblock \textit{arXiv: 2008.05980}.

\bibitem[{Manski(2013)}]{Manski}
Manski, C.~F. (2013) Identification of treatment response with social
  interactions.
\newblock \textit{The Econometrics Journal}, \textbf{16}, S1--S23.

\bibitem[{Morgan and Rubin(2012)}]{Morgan_and_Rubin}
Morgan, K.~L. and Rubin, D.~B. (2012) {Rerandomization to improve covariate
  balance in experiments}.
\newblock \textit{The Annals of Statistics}, \textbf{40}, 1263 -- 1282.

\bibitem[{Peeters(2003)}]{Peeters}
Peeters, R. (2003) The maximum edge biclique problem is {NP}-complete.
\newblock \textit{Discrete Applied Mathematics}, \textbf{131}, 651--654.

\bibitem[{Preli^^c4^^87 et~al.(2006)Preli^^c4^^87, Bleuler, Zimmermann, Wille,
  B^^c3^^bchlmann, Gruissem, Hennig, Thiele and Zitzler}]{Prelic}
Preli^^c4^^87, A., Bleuler, S., Zimmermann, P., Wille, A., B^^c3^^bchlmann, P.,
  Gruissem, W., Hennig, L., Thiele, L. and Zitzler, E. (2006) {A systematic
  comparison and evaluation of biclustering methods for gene expression data}.
\newblock \textit{Bioinformatics}, \textbf{22}, 1122--1129.

\bibitem[{Puelz(2020)}]{CliqueRT}
Puelz, D. (2020) \textit{CliqueRT: Randomization-Based Testing of Causal
  Effects under General Interference}.
\newblock R package version 1.0.

\bibitem[{Puelz et~al.(2022)Puelz, Basse, Feller and Toulis}]{Puelz}
Puelz, D., Basse, G., Feller, A. and Toulis, P. (2022) A graph-theoretic
  approach to randomization tests of causal effects under general interference.
\newblock \textit{Journal of the Royal Statistical Society: Series B
  (Statistical Methodology)}, \textbf{84}, 174--204.

\bibitem[{Rubin(1980)}]{Rubin}
Rubin, D.~B. (1980) Randomization analysis of experimental data: The {F}isher
  randomization test comment.
\newblock \textit{Journal of the American Statistical Association},
  \textbf{75}, 591--593.

\end{thebibliography}

\end{document}